\newtheorem{theorem}{Theorem}
\newtheorem{lemma}{Lemma}
\newtheorem{corollary}{Corollary}
\newtheorem{proof}{Proof}
\begin{document}

\title{Physical Layer Security in Wireless Ad Hoc Networks Under A Hybrid Full-/Half-Duplex Receiver Deployment Strategy}
\author{Tong-Xing~Zheng,~\IEEEmembership{Member,~IEEE,}
~Hui-Ming~Wang,~\IEEEmembership{Senior Member,~IEEE,~}\\
Jinhong~Yuan,~\IEEEmembership{Fellow,~IEEE,~}~Zhu~Han,~\IEEEmembership{Fellow,~IEEE,~}and~Moon~Ho~Lee,~\IEEEmembership{Life~Senior~Member,~IEEE}
\thanks{T.-X.~Zheng and H.-M. Wang are with the School of Electronic and Information Engineering,
		Xi'an Jiaotong University, Xi'an, 710049, Shaanxi, China. Email: {\tt txzheng@stu.xjtu.edu.cn},
		{\tt xjbswhm@gmail.com}.}
	\thanks{J.~Yuan is with the School of Electrical Engineering and Telecommunications, University of New South Wales, Sydney, Australia. Email: {\tt j.yuan@unsw.edu.au}.}
	\thanks{Z.~Han is with the Electrical and Computer Engineering Department, University of Houston, Houston, TX, USA. Email: {\tt zhan2@uh.edu}.}
	\thanks{M. H. Lee is with the Division of Electronics Engineering, Chonbuk National 	University, Jeonju 561-756, Korea. Email: {\tt moonho@jbnu.ac.kr}.}
}

\maketitle
\vspace{-0.8 cm}

\begin{abstract}
This paper studies physical layer security in a wireless ad hoc network with numerous legitimate transmitter-receiver pairs and eavesdroppers.
A hybrid full-/half-duplex receiver deployment strategy is proposed to secure legitimate transmissions, by letting a fraction of legitimate receivers work in the full-duplex (FD) mode sending jamming signals to confuse eavesdroppers upon their information receptions, and letting the other receivers work in the half-duplex mode just receiving their desired signals.
The objective of this paper is to choose properly the fraction of FD receivers for achieving the optimal network security performance.
Both accurate expressions and tractable approximations for the connection outage probability and the secrecy outage probability of an arbitrary legitimate link are derived, based on which the area secure link number, network-wide secrecy throughput and network-wide secrecy energy efficiency are optimized respectively.
Various insights into the optimal fraction are further developed and its closed-form expressions are also derived under perfect self-interference cancellation or in a dense network.
It is concluded that the fraction of FD receivers triggers a non-trivial trade-off between reliability and secrecy, and the proposed strategy can significantly enhance the network security performance.

\end{abstract}

\begin{IEEEkeywords}
    Physical layer security, ad hoc network, full-duplex receiver, outage, stochastic geometry.
\end{IEEEkeywords}

\IEEEpeerreviewmaketitle

\section{Introduction}

\IEEEPARstart{T}{he} rapid development in wireless communications has brought unprecedented attention to information security.
Traditionally, security issues are addressed at the upper layers of communication protocols by using encryption.
However, the large-scale and dynamic topologies in emerging wireless networks pose a great challenge in implementing secret key management and distribution, particularly in a decentralized wireless ad hoc network without infrastructure support \cite{Poor2012Information}.
Fortunately, \emph{physical layer security}, an information-theoretic approach that attains secure transmissions by exploiting the randomness of wireless channels without necessarily relying on secret keys, is becoming increasingly recognized as a promising alternative to complement the cryptography-based security mechanisms \cite{Wyner1975Wire-tap}-\cite{Goel2008Guaranteeing}.
%In the past decade, physical layer security has been extensively applied for achieving security in broadcast channels \cite{Geraci2013Large,Liu2010Multiple}, multi-antenna channels \cite{Liu2010Multiple}-\cite{Zheng2015Multi}, and cooperative relay channels \cite{Dong2010Improving}-\cite{Wang2012Distributed}, etc.
%A more thorough reference work in the fundamental theory and the evolution of physical layer security can be
%found in \cite{Mukherjee2014Principles}.

Early studies on physical layer security have mainly focused on point-to-point transmissions, and metrics from user viewpoint such as secrecy capacity \cite{Liu2010Multiple}, ergodic secrecy rate \cite{Zhou2010Secure,Wang2015Secure} and secrecy outage probability \cite{Zheng2015Outage}-\cite{Zheng2016Optimal} have been used to evaluate the secrecy level in different scenarios/applications.
	From a \emph{network-wide} perspective, physical layer security has also shown its potential  \cite{Wang2016Physical,Wang2016Physical Springer}.
	Many efforts have already been devoted to improve network security in terms of the area secure link number (ASLN) \cite{Ma2015Interference,Wang Chao2016Physical} and network-wide secrecy throughput (NST) \cite{Zhou2011Throughput,Zhang2013Enhancing}.
	More recently, energy-efficient green wireless network has attracted considerable interests due to energy scarcity, and some research works have been carried out for enhancing network-wide secrecy energy efficiency (NSEE) \cite{Ng2012Energy,Chen2013Energy}.

\subsection{Previous Endeavors and Motivations}
To improve the secrecy of information delivery for an ad hoc network, an efficient approach is to degrade the wiretapping ability of eavesdroppers through emitting jamming signals \cite{Goel2008Guaranteeing}.
For example, the authors in \cite{Zhou2011Throughput}  propose a cooperative jamming strategy with single-antenna legitimate transmitters, and when eavesdroppers access a transmitter's secrecy guard zone \cite{Zhou2011Throughput}, this transmitter will act as a friendly jammer to send jamming signals to confuse eavesdroppers.
This work is extended by \cite{Zhang2013Enhancing} to a multi-antenna transmitter scenario, and artificial noise \cite{Goel2008Guaranteeing} with either sectoring or beamforming  is exploited to impair eavesdroppers.
Although these endeavors are shown to achieve a remarkable secrecy throughput enhancement, friendly jammers or multi-antenna transmitters might not be available in many applications.
For instance, constrained by the size and hardware cost, a sensor node is usually equipped with only a single antenna.
Furthermore, due to a low-power constraint, a sensor has no
extra power to send jamming signals.
In such unfavorable situations, information transfer is still vulnerable to eavesdropping.

Fortunately, recent advances in developing in-band full-duplex (FD) radios provide a new opportunity to strengthen information security in the aforementioned situations. Effective self-interference cancellation (SIC) techniques enable a transceiver to transmit and receive at the same time on the same frequency band \cite{Song2016Full}.
Although the transmitter (sensor) is
vulnerable to eavesdropping, we can deploy powerful FD receivers such as data collection stations to radiate jamming signals upon their information receptions.
By doing so, additional degrees of freedom can be gained for improving network security.
In fact, the idea of using FD receiver jamming to improve physical layer security has already been reported by \cite{Li2012Secure}-\cite{Parsaeefard2015Improving} for point-to-point transmission scenarios.
Specifically, the authors in  \cite{Li2012Secure} and \cite{Zheng2013Improving} consider a single-input multi-output (SIMO) channel with the receiver using single- and multi-antenna jamming, respectively.
The authors in \cite{Zhou2014Application}
consider a multi-input multi-output (MIMO) channel with both transmitter and receiver generating artificial noise.
These works are further extended in two-way  transmissions \cite{Cepheli2014A_high}, cooperative communications
\cite{Chen2015Physical},
\cite{Parsaeefard2015Improving}, and cellular networks \cite{Zhu2016Physical}.
Recently, we have studied the design of the optimal density of the overlaid FD-mode tier to maximize its NST while guaranteeing a minimum network-wide throughput for the underlaid HD-mode tier \cite{Zheng2017Safeguarding}.
Generally, investigating the potential benefits of FD receiver jamming techniques in enhancing information security from a \emph{network} perspective is an interesting, but much more sophisticated issue, since we should take into account numerous interferers and eavesdroppers that are randomly distributed over the network. In addition, using FD receiver jamming in a network is confronted with two fundamental challenges as follows:
\begin{itemize}
	\item Theoretically,
	activating too many FD receivers to send jamming signals brings severe self- and mutual-interference to legitimate receivers, thus impairing the reliability of the ongoing information transmission.
	This will result in few secure links being established and accordingly the poor secrecy throughput.
	\item Practically, employing FD receivers incurs more system cost and overhead. FD transceivers are more expensive than half-duplex (HD) transceivers. From energy efficiency perspective,
	more circuit power is consumed to enable the FD operation or to mitigate the self-interference caused by FD radios, which leads to low energy efficiency.
\end{itemize}

Motivated by these, a proper way to deploy FD receivers is to make a portion of legitimate receivers work in the FD mode simultaneously sending jamming signals and receiving desired signals, and make the rest work in the HD mode just receiving desired signals. This results in a hybrid full-/half-duplex receiver deployment strategy.
Then, a question is naturally raised: \emph{What should be the optimal fraction of FD receivers in order to optimize the network security performance?}
To the best of our knowledge, this question has not been answered by  existing literature. So far, a fundamental analysis on the network security performance, in aspects like \emph{ASLN}, \emph{NST} and \emph{NSEE}, is still lacking for a wireless ad hoc network with hybrid full-/half-duplex receivers.
This motivates our work.

\subsection{Our Work and Contributions}
In this paper, we study physical layer security for a wireless ad hoc network under a stochastic geometry framework \cite{Haenggi2009Stochastic}.
Each transmitter in this network is equipped with a single antenna and sends a secret message to an intended single-antenna receiver, in the presence of randomly located multi-antenna eavesdroppers. A hybrid full-/half-duplex receiver deployment strategy is proposed, where a fraction of legitimate receivers work in the FD mode receiving desired signals and radiating jamming signals simultaneously, and the remaining receivers work in the HD mode just receiving desired signals.
The main contributions of this paper are summarized as follows:
\begin{itemize}
	\item
	We investigate the fundamental tradeoff between secrecy and reliability via jammers.
	We analyze the connection outage probability and the secrecy	outage probability of a typical legitimate link, and provide both accurate expressions and tractable approximations
	for them.
	\item
	We study three important performance metrics on network security, namely, ASLN, NST and NSEE, respectively.
	We prove that these metrics are all \emph{quasi-concave} functions of the fraction of FD receivers, and derive the optimal deployment fractions to maximize them.
	\item
	We further develop insights into the behavior of the optimal fraction of FD receivers with respect to various network parameters.
	We also provide closed-form expressions for this optimal fraction in special cases, e.g., under a perfect SIC assumption or in a dense network.
\end{itemize}

%It is worth mentioning that the fundamental tradeoff between secrecy and reliability has also been discussed in \cite{Zhou2011Throughput}. However, our work  distinguishes from \cite{Zhou2011Throughput}, not only in the analysis but also in some new results related to network design. For example, in \cite{Zhou2011Throughput} the density of legitimate nodes is designed to maximize NST, whereas here we design the allocation between FD and HD receivers to maximize ASLN, NST and NSEE. Generally, to switch a node's work mode between FD and HD modes is more convenient and flexible and perhaps causes a lower network load than to change the network scale. What's more, the authors in \cite{Zhou2011Throughput} only provide the optimal density in the low throughput regime, whereas here we derive the optimal fraction of FD receivers for a more general scenario. Therefore we are able to develop new insights into the behaviors of the optimal fraction with respect to various parameters that will significantly affect network security but are ignored in \cite{Zhou2011Throughput}, e.g., the distance between a legitimate pair and the required connection outage probability.

\subsection{Organization and Notations}
The remainder of this paper is organized as follows.
In Section II, we describe the system model.
In Section III, we analyze the connection outage and secrecy outage probabilities of an arbitrary legitimate link.
In Sections IV, V and VI, we optimize the fraction of FD receivers to maximize ASLN, NST and NSEE, respectively.
In Section VII, we conclude our work.

\emph{Notations}:
bold uppercase (lowercase) letters denote matrices (vectors).
$(\cdot)^H$, $(\cdot)^{-1}$,  $\mathrm{Pr}\{\cdot\}$, and $\mathbb{E}_A(\cdot)$ denote Hermitian transpose, inversion, probability, and the  expectation of $A$, respectively.
$\mathrm{CN}(\mu, \nu)$ denotes the circularly symmetric complex Gaussian distribution with mean $\mu$ and variance $\nu$.
$\ln(\cdot)$ denotes the natural logarithm.
$f^{'}(q)$ and $f^{''}(q)$ denote the first- and second-order derivatives of $f(q)$ on $q$, respectively.
$\mathbf{1}_{\mathrm{FD}}(x)$ is an indicator function with $\mathbf{1}_{\mathrm{FD}}(x)=1$ for $x\in\{\mathrm{FD}\}$ and $\mathbf{1}_{\mathrm{FD}}(x)=0$ for $x\notin\{\mathrm{FD}\}$.
${L}_{I}(s)=\mathbb{
	E}_I\left(e^{-sI}\right)$ is the Laplace transform of $I$.
$[x]^{+}\triangleq \max(x,0)$.

\section{System Model}
We consider a wireless ad hoc network composed of numerous single-antenna legitimate transmitter-receiver pairs, coexisting with randomly located $N_e$-antenna eavesdroppers.
Each legitimate transmitter sends a secret message to its paired receiver located a distance
$r_o$ away\footnote{The assumption of a common legitimate link distance is quite generic in analyzing a wireless ad hoc network  \cite{Zhou2011Throughput,Zhang2013Enhancing}, which eases the mathematical analysis.
	Nevertheless, in principle the obtained results can be  generalized to an arbitrary distribution of $r_o$ \cite{Haenggi2009Stochastic}.}.
We assume that a fraction $q$ of legitimate receivers work in the FD mode such that each of them simultaneously receives the desired signal and radiates a jamming signal to confuse eavesdroppers, and the others work in the HD mode only receiving desired signals.
Legitimate receivers and eavesdroppers are distributed according to
independent homogeneous Poisson point processes (PPPs) \cite{Zheng2014Transmission} $\Phi_l$ with density $\lambda_l$ and $\Phi_e$ with density $\lambda_e$, respectively.
Using the property of thinning for a PPP, the distributions of HD and FD receivers follow independent PPPs $\Phi^{\text{HD}}$ with density $\lambda^{\text{HD}} = (1-q)\lambda_l$ and $\Phi^{\text{FD}}$ with density $\lambda^{\text{FD}} = q\lambda_l$, respectively.
We denote by $\tilde \Phi^\text{HD}$ and $\tilde \Phi^\text{FD}$ the location sets of the transmitters corresponding to HD and FD receivers, respectively.
According to the displacement theorem \cite[page 35]{Haenggi2012Stochastic},
$\tilde \Phi^\text{HD}$ and $\tilde \Phi^\text{FD}$ are also independent PPPs with densities $\lambda^\text{HD}$ and $\lambda^\text{FD}$, respectively.
We use $r_{xy}$ to denote the distance between a node located at $x$ and a node at $y$.
For convenience, we use $\tilde x$ to denote the location of a transmitter whose paired receiver is located at $x$.

Wireless channels, including legitimate channels and wiretap channels, are assumed to suffer a large-scale path loss governed by the exponent $\alpha>2$ together with a quasi-static Rayleigh fading with fading coefficients independent and identically distributed (i.i.d.) obeying $\mathrm{CN}(0,1)$.
%Since each eavesdropper receives signals passively, its channel state information is not known, whereas its channel statistics information is assumed to be available\footnote{This hypothesis is quite generic and has been widely adopted in the literature on physical layer security, e.g.,
%\cite{Wang2015Enhancing}-\cite{Goel2008Guaranteeing}.}.
Due to uncoordinated concurrent transmissions, the aggregate interference at a receiver dominates the thermal noise.
Thereby, we concentrate on an interference-limited scenario by ignoring thermal noise, given that the inclusion of thermal noise results in a more complicated analysis but provides no significant qualitative difference.
% Nevertheless, our results can be easily generalized to the case with thermal noise.

Throughout this paper, we denote $\mathbf{S}\in\{\text{HD},\text{FD}\}$ by default.
For convenience, the legitimate link with an $\mathbf{S}$ receiver is called an $\mathbf{S}$-link.
Considering a typical $\mathbf{S}$ receiver located at the origin $o$, its signal-to-interference ratio (SIR) is given by
\begin{align}\label{sir_S}
\gamma_o^\mathbf{S} = \frac{P_th_{\tilde o o}r_o^{-\alpha}}
{I^\text{HD}+I^\text{FD}+\mathbf{1}_{\mathrm{FD}}(\mathbf{S})\eta P_j},
\end{align}
where $I^\text{HD}\triangleq \sum_{\tilde x\in\tilde \Phi^\text{HD}}P_t h_{\tilde x o} r_{\tilde x o}^{-\alpha}$,
$I^\text{FD}\triangleq\sum_{\tilde x\in\tilde \Phi^\text{FD}\setminus \tilde o}\left(P_t h_{\tilde x o} r_{ \tilde x o}^{-\alpha}+P_j h_{x o}r_{ xo}^{-\alpha}\right)$
and $\eta P_j $ denote the interferences from HD links, from FD links and from the typical FD receiver itself, respectively;
$P_t$ and $P_j$ denote the transmit powers of a legitimate transmitter and of an FD receiver, respectively;
$h_{x y}$ denotes the fading channel gain obeying $\mathrm{Exp(1)}$;
$\eta$ is a parameter that reflects the SIC capability, and $\eta=0$ refers to a perfect SIC while $0<\eta\le 1$ corresponds to different levels of SIC.
Note that
$r_{\tilde x o}$ and $r_{x o}$ in $I^\text{HD}$ and $I^\text{FD}$ are correlated, and they satisfy  $r_{\tilde x o}=\sqrt{r_{x o}^2+r_o^2-2r_{x o}r_o\cos\theta_x}$, where the angle $\theta_x$ is uniformly distributed in the range $[0, 2\pi]$.

For eavesdroppers, we consider a worst-case wiretap scenario where each eavesdropper has multiuser decoding ability and adopts a successive interference cancellation minimum mean square
error (MMSE) receiver.
The eavesdropper located at $e$ is able to decode and cancel undesired information signals and uses the MMSE detector
%Consider the eavesdropper located at $e$, and the weight vector of it has the form of
\begin{equation}\label{weight_mmse}
\bm w^{\mathbf{S}}_{e} =  \left(\bm R^{\mathbf{S}}_{e}\right)^{-1}\bm g_{\tilde oe}
\end{equation}
to aggregate the desired signal, where $\bm R^{\mathbf{S}}_{e} \triangleq \sum_{x\in\Phi^{\text{FD}}\setminus o}{P_j\bm g_{xe}\bm g_{xe}^H}{ r_{x e}^{-\alpha}}
+\mathbf{1}_{\mathrm{FD}}(\mathbf{S})\bm R_{oe}$ with $\bm R_{oe}\triangleq P_j\bm g_{oe}\bm g_{oe}^H r_{o e}^{-\alpha}$, and
$\bm g_{x e}$ denotes the $N_e\times 1$ complex fading coefficient vector related to the link from a node at $x$ to an eavesdropper at $e$.
The corresponding SIR of the eavesdropper is
\begin{equation}\label{sir_e}
\gamma^{\mathbf{S}}_{e} = P_t\bm g_{\tilde oe}^H(\bm R^{\mathbf{S}}_{e})^{-1}\bm g_{\tilde oe}r_{\tilde o e}^{-\alpha}.
\end{equation}

\subsection{Secrecy Performance Metrics}
We assume eavesdroppers do not collude with each other such that each of them individually decodes a secret message.
To guarantee secrecy, each legitimate transmitter adopts the Wyner's wiretap encoding scheme \cite{Wyner1975Wire-tap} to encode secret information.
Thereby, two types of rates, namely, the rate of transmitted codewords $R_t$ and the rate of embedded information bits ${R}_{s}$, need to be designed to meet requirements in terms of the connection outage and secrecy outage probabilities.

\begin{itemize}
	
	\item
	\emph{Connection outage probability}.
	If a legitimate $\mathbf{S}$-link can support rate ${R}_t$, the legitimate receiver is able to decode a secret message and perfect connection is assured in this link; otherwise a connection outage occurs.
	The probability that such a connection outage event takes place is referred to as the connection outage probability, denoted as $p_{co}^{\mathbf{S}}$.
	
	\item
	\emph{Secrecy outage probability}.
	In the Wyner's wiretap encoding scheme, the rate redundancy
	${R}_e\triangleq {R}_t -{R}_s$ is exploited to provide secrecy against eavesdropping.
	If the value of ${R}_e$ lies above the capacity of the most detrimental eavesdropping link,	
	no information is leaked to eavesdroppers and perfect secrecy is promised in the legitimate link \cite{Wyner1975Wire-tap}; otherwise a secrecy outage occurs.
	The probability that such a secrecy outage event takes place in an $\mathbf{S}$-link is referred to as the secrecy outage probability, denoted as $p_{so}^{\mathbf{S}}$.
	
\end{itemize}

In this paper, we concern ourselves with the following three important metrics that measure the network-wide security performance from an outage perspective.

1) \emph{ASLN}.
A link in which neither connection outage nor secrecy outage occurs is called a secure link \cite{Ma2015Interference}.
To measure how many secure links can be guaranteed under rates $R_t$ and $R_s$, we use the metric named ASLN, which is defined as the average number of secure links per unit area.
Due to the independence of $p_{co}^{\text{S}}$ and $p_{so}^{\text{S}}$, ASLN, denoted as $\bm{N}$, is mathematically given by
\begin{equation}\label{sl_def}
\bm{N}\triangleq q\lambda_l(1-p_{co}^{\text{FD}})(1-p_{so}^{\text{FD}})
+
(1-q)\lambda_l(1-p_{co}^{\text{HD}})(1-p_{so}^{\text{HD}}).
\end{equation}

2) \emph{NST}.
To assess the efficiency of secure transmissions, we use the metric named NST \cite{Zhou2011Throughput}, which is defined as the achievable rate of successful information transmission per unit area under the required  connection outage and secrecy outage  probabilities.
The NST, denoted as $\mathbf{\Omega}$, under a connection outage probability   $p_{co}^{\mathbf{S}}=\sigma$ and a secrecy outage probability $p_{so}^{\mathbf{S}}=\epsilon$ is given by
\begin{equation}\label{st_def}
\mathbf{\Omega}\triangleq q\lambda_l(1-\sigma){R}^{\text{FD}}_s
+
(1-q)\lambda_l(1-\sigma){R}^{\text{HD}}_s,
\end{equation}
where ${R}^{\mathbf{S}}_s\triangleq [{R}^{\mathbf{S}}_t-{R}^{\mathbf{S}}_e]^+$, with ${R}^{\mathbf{S}}_t$ and ${R}^{\mathbf{S}}_e$ the codeword rate and redundant rate that satisfy $p_{co}^{\mathbf{S}}(R^{\mathbf{S}}_t)=\sigma$
and $p_{so}^{\mathbf{S}}(R^{\mathbf{S}}_e)=\epsilon$, respectively.
The unit of $\mathbf{\Omega}$ is  $\mathrm{nats/s/Hz/m^2}$.

3) \emph{NSEE}.
To evaluate the energy efficiency of secure transmissions, we use the metric named NSEE, denoted as $\mathbf{\Psi}$, which is defined as the ratio of NST to the power consumed per unit area,
\begin{equation}\label{ee_def}
\mathbf{\Psi} \triangleq \frac{\mathbf{\Omega}}{\lambda_l(P_t+P_c)+q\lambda_lP_j},
\end{equation}
where $P_c$ combines the dynamic circuit power consumption of transmit chains and the static power consumption in transmit modes \cite{Ha2013Energy}.
The unit of $\mathbf{\Psi}$ is $\mathrm{nats/Joule/Hz}$.

We emphasize that, the fraction $q$ of FD receivers triggers a non-trivial trade-off between reliability and secrecy, and plays a key role in improving the metrics given above.
Intuitively, under a larger $q$, more FD jammers are activated against eavesdroppers which benefits the secrecy; whereas the increased jamming signals also interfere with legitimate receivers and thus harm the reliability.
The overall balance of such conflicting effects needs to be carefully addressed.
In Sections IV, V and VI, we are going to respectively determine the optimal fraction $q$ that

\begin{itemize}
	
	\item
	maximizes ASLN $\bm{N}$ given a pair of wiretap code rates $R_t$ and $R_s$;
	
	\item
	maximizes NST $\mathbf{\Omega}$ given a pair of outage probabilities $\sigma$ and $\epsilon$;
	
	\item
	maximizes NSEE $\mathbf{\Psi}$ with and without considering a minimum required NST.
	
\end{itemize}
Before proceeding, in the following section we first provide some insights into the behavior of the connection outage probability $p_{co}^{\mathbf{S}}$ and the secrecy outage probability $p_{so}^{\mathbf{S}}$ with respect to network parameters like $q$, $\eta$, etc., which is very important to subsequent network design.

\section{Outage Probability Analysis}
In this section, we derive the connection outage probability and the secrecy outage probability for an arbitrary legitimate link.
For ease of notation, we define $\delta\triangleq {2}/{\alpha}$,
$\kappa\triangleq {\pi\Gamma\left(1+\delta\right)\Gamma\left(1-\delta\right)}$ and $\rho\triangleq {P_j}/{P_t}$, which will be used throughout the paper.

\subsection{Connection Outage Probability}
The connection outage probability of a typical $\mathbf{S}$-link is defined as the probability that the SIR  $\gamma_o^\mathbf{S}$ given in \eqref{sir_S} falls below an SIR threshold $\tau_t\triangleq2^{R_t}-1$, i.e.,
\begin{equation}\label{pco_def}
p_{co}^\mathbf{S}\triangleq \mathrm{Pr}\{\gamma_o^\mathbf{S}< \tau_t\}.
\end{equation}
The general expression of $p^\mathbf{S}_{co}$ is provided by the following theorem.
The interested readers are referred to \cite[Th. 1]{Zheng2017Safeguarding} for a detailed proof.

\begin{theorem}\label{pco_exact_theorem}
	The connection outage probability of a typical $\mathbf{S}$-link is given by
	\begin{align}\label{pco_exact}
	p^\mathbf{S}_{co} = 1-e^{-\mathbf{1}_{\mathrm{FD}}(\mathbf{S})\rho\eta r_o^{\alpha}\tau_t}
	e^{-\kappa(1-q)\lambda_l
		r_o^2\tau_t^{\delta}}
	{L}_{I^\text{FD}}\left(
	{r_o^\alpha\tau_t}/{P_t}\right),
	\end{align}
	where
	${L}_{I^\text{FD}}({r_o^\alpha\tau_t}/{P_t})
	=\exp\Big(-q\lambda_l\int_0^{\infty}
	\int_0^{2\pi}	\Big(1-
\frac{1}{1+r_o^\alpha\tau_tv^{-\alpha}}
	\frac{1}
	{1+r_o^\alpha\tau_t\left(v^2+r_o^2-2vr_o \cos\theta\right)^{-\alpha/2}}
	\Big)vd\theta dv\Big)$.
\end{theorem}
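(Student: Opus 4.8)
The plan is to compute the complementary (coverage) probability $\mathrm{Pr}\{\gamma_o^\mathbf{S}\ge\tau_t\}$ and subtract it from one. Starting from \eqref{sir_S} and \eqref{pco_def}, I would first rearrange the event $\{\gamma_o^\mathbf{S}\ge\tau_t\}$ into a threshold condition on the desired-link fading gain,
\[
\mathrm{Pr}\{\gamma_o^\mathbf{S}\ge\tau_t\}=\mathrm{Pr}\Big\{h_{\tilde o o}\ge \tfrac{r_o^\alpha\tau_t}{P_t}\big(I^\text{HD}+I^\text{FD}+\mathbf{1}_{\mathrm{FD}}(\mathbf{S})\eta P_j\big)\Big\}.
\]
Because $h_{\tilde o o}\sim\mathrm{Exp}(1)$ is independent of every interference term, conditioning on the interference and using $\mathrm{Pr}\{h_{\tilde o o}\ge t\}=e^{-t}$ converts this probability into the expectation $\mathbb{E}\big[e^{-s(I^\text{HD}+I^\text{FD}+\mathbf{1}_{\mathrm{FD}}(\mathbf{S})\eta P_j)}\big]$ with $s\triangleq r_o^\alpha\tau_t/P_t$. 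This is the standard Rayleigh-fading step that replaces an outage probability by a Laplace transform of the aggregate interference.

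Next I would exploit independence to factorize this expectation. The self-interference $\mathbf{1}_{\mathrm{FD}}(\mathbf{S})\eta P_j$ is deterministic, so it pulls out as $e^{-s\,\mathbf{1}_{\mathrm{FD}}(\mathbf{S})\eta P_j}=e^{-\mathbf{1}_{\mathrm{FD}}(\mathbf{S})\rho\eta r_o^\alpha\tau_t}$, which is the first factor in \eqref{pco_exact}. Since $\tilde\Phi^\text{HD}$ and $\Phi^\text{FD}$ are independent PPPs with mutually independent fading gains, the remaining expectation splits as the product ${L}_{I^\text{HD}}(s)\,{L}_{I^\text{FD}}(s)$. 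For the HD term I would apply the probability generating functional (PGFL) of the PPP $\tilde\Phi^\text{HD}$ of density $\lambda^\text{HD}=(1-q)\lambda_l$; averaging each interferer's $\mathrm{Exp}(1)$ gain gives the per-point kernel $1-\tfrac{1}{1+sP_t r^{-\alpha}}$, and the resulting radial integral $2\pi\int_0^\infty\big(1-\tfrac{1}{1+sP_t r^{-\alpha}}\big)r\,dr$ evaluates in closed form to $\pi\Gamma(1+\delta)\Gamma(1-\delta)(sP_t)^\delta$. Substituting $sP_t=r_o^\alpha\tau_t$ and using $\alpha\delta=2$ yields ${L}_{I^\text{HD}}(s)=e^{-\kappa(1-q)\lambda_l r_o^2\tau_t^\delta}$, matching the second factor in \eqref{pco_exact}.

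The delicate step is the FD Laplace transform ${L}_{I^\text{FD}}(s)$, because each FD link contributes two coupled interference terms---the transmitter's signal $P_t h_{\tilde x o}r_{\tilde x o}^{-\alpha}$ and the receiver's jamming signal $P_j h_{x o}r_{xo}^{-\alpha}$---whose distances are correlated through the fixed link length $r_o$. I would take the receiver PPP $\Phi^\text{FD}$ of density $q\lambda_l$ as the underlying process and apply its PGFL, so that a generic receiver at distance $v=r_{xo}$ contributes a kernel obtained by averaging over the two independent $\mathrm{Exp}(1)$ gains and over the relative angle $\theta_x$. The independent gains factor the kernel into a jamming-signal factor and a transmitter-signal factor, and substituting the law of cosines $r_{\tilde x o}=\sqrt{v^2+r_o^2-2vr_o\cos\theta_x}$ reproduces exactly the double integrand in \eqref{pco_exact}. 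The main obstacle is precisely this correlation: unlike the HD term, the coupling between $v$ and $r_{\tilde x o}$ through $\theta_x$ prevents the angular and radial integrals from separating, so no elementary closed form exists and the answer must be left as the stated $\int_0^\infty\int_0^{2\pi}(\cdot)\,v\,d\theta\,dv$. Collecting the self-interference factor, ${L}_{I^\text{HD}}(s)$, and ${L}_{I^\text{FD}}(s)$, and subtracting from one, gives \eqref{pco_exact}.
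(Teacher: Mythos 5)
Your proposal is correct and follows essentially the same route as the paper's (cited) proof: the exponential-fading step that turns the outage probability into a Laplace transform at $s=r_o^\alpha\tau_t/P_t$, factoring out the deterministic self-interference term, the PGFL of $\tilde\Phi^{\text{HD}}$ giving the closed-form $\kappa(1-q)\lambda_l r_o^2\tau_t^\delta$ exponent, and a marked-PPP PGFL over the FD pairs with the law-of-cosines coupling between $r_{xo}$ and $r_{\tilde xo}$, which is exactly why that term stays as a double integral.

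One caveat on your final claim of reproducing the integrand ``exactly'': carrying your own recipe through, the jamming-signal factor is $\mathbb{E}_h\bigl[e^{-sP_jh\,v^{-\alpha}}\bigr]=1/\left(1+\rho\, r_o^\alpha\tau_t v^{-\alpha}\right)$ since $sP_j=\rho\, r_o^\alpha\tau_t$, whereas the kernel displayed in \eqref{pco_exact} carries $r_o^\alpha\tau_t$ (i.e., no $\rho$) in both factors. So your derivation matches the stated formula verbatim only when $\rho=1$; for general $\rho$ the displayed kernel appears to have dropped a $\rho$ (note that the bounds in Theorem \ref{pco_bound_theorem} do depend on $\rho^{\delta}q$, which is consistent with your version, not the displayed one). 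This is a defect of the statement rather than of your argument, but a complete write-up should flag it rather than assert exact agreement.
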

%\begin{proof}
%	Please refer to Appendix \ref{appendix_pco_exact_theorem}.
%\end{proof}

Theorem \ref{pco_exact_theorem} provides an exact connection outage probability with  three parts $e^{-\mathbf{1}_{\mathrm{FD}}(\mathbf{S})\rho\eta r_o^{\alpha}\tau_t}$, $e^{-\kappa(1-q)\lambda_lr_o^2\tau_t^{\delta}}$ and ${L}_{I^\text{FD}}\left({r_o^\alpha\tau_t}/{P_t}\right)$, reflecting the impacts of the interferences from the typical receiver itself, from HD links and from FD links, respectively.
Although with $p^\mathbf{S}_{co}$ given in \eqref{pco_exact} we no longer need to execute time-consuming Monte Carlo simulations, the double integral in $ {L}_{I^\text{FD}}({r_o^\alpha\tau_t}/{P_t})$ greatly complicates the further analysis, which motivates a more compact form.
In the following theorem, we provide the closed-form upper and lower bounds for $p^\mathbf{S}_{co}$, and refer the interested readers to \cite[Th. 2]{Zheng2017Safeguarding} for a detailed proof.
\begin{theorem}\label{pco_bound_theorem}
	Connection outage probability	 $p^\mathbf{S}_{co}$ is upper and lower bounded respectively by %$p^{\mathbf{S},\mathrm{U}}_{co}$ and  $p^{\mathbf{S},\mathrm{L}}_{co}$ given below,
	\begin{align}
	\label{pco_upper}
	p^{\mathbf{S},\mathrm{U}}_{co}&= 1-e^{-\mathbf{1}_{\mathrm{FD}}(\mathbf{S})\rho\eta r_o^{\alpha}\tau_t}
	e^{-\kappa
		r_o^2\tau_t^{\delta}\lambda_l\left(1+\rho^\delta q\right)},
	\\
	\label{pco_lower}
\!\!\!	p^{\mathbf{S},\mathrm{L}}_{co}&= 1-e^{-\mathbf{1}_{\mathrm{FD}}(\mathbf{S})\rho\eta r_o^{\alpha}\tau_t}
	e^{-\kappa
		r_o^2\tau_t^{\delta}\lambda_l\left(1+\frac{(1+\delta)\rho^{\delta}-(1-\delta)}{2}q\right)}.\!\!\!
	\end{align}
\end{theorem}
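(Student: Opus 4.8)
The plan is to reduce both bounds to bounds on the single double integral appearing inside $L_{I^{\text{FD}}}$ in Theorem~\ref{pco_exact_theorem}, since that integral is the only part of $p^{\mathbf S}_{co}$ not already in closed form. Let $J$ denote that double integral, so $L_{I^{\text{FD}}}(r_o^\alpha\tau_t/P_t)=e^{-q\lambda_l J}$ and the exact expression reads $p^{\mathbf S}_{co} = 1 - e^{-\mathbf 1_{\mathrm{FD}}(\mathbf S)\rho\eta r_o^\alpha\tau_t}\, e^{-\kappa(1-q)\lambda_l r_o^2\tau_t^\delta}\, e^{-q\lambda_l J}$. From this form $p^{\mathbf S}_{co}$ is monotonically increasing in $J$, so an upper (resp.\ lower) bound on $J$ yields an upper (resp.\ lower) bound on $p^{\mathbf S}_{co}$. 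Moreover one checks directly that \eqref{pco_upper} is exactly the exact expression with $J$ replaced by $\kappa r_o^2\tau_t^\delta(1+\rho^\delta)$, and \eqref{pco_lower} is the exact expression with $J$ replaced by $\frac{1+\delta}{2}\kappa r_o^2\tau_t^\delta(1+\rho^\delta)$. The whole task is therefore to sandwich $J$ between these two quantities.

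First I would split the integrand, which has the form $1-\phi_j\phi_t$, where $\phi_j(v)$ is the factor from a jammer at distance $v$ (carrying the jammer power $P_j=\rho P_t$, which is what contributes the $\rho^\delta$ below) and $\phi_t(v,\theta)$ is the factor from its paired transmitter at the correlated distance $r_{\tilde x o}=\sqrt{v^2+r_o^2-2vr_o\cos\theta}$. Setting $\xi\triangleq 1-\phi_j$ and $\zeta\triangleq 1-\phi_t$, the identity $1-\phi_j\phi_t=\xi+\zeta-\xi\zeta$ separates the two interferers up to the cross term $\xi\zeta$. The two first-moment integrals are standard homogeneous-PPP interference integrals, $\int\!\!\int \xi\,v\,d\theta\,dv = \kappa r_o^2\tau_t^\delta\rho^\delta$ and $\int\!\!\int \zeta\,v\,d\theta\,dv = \kappa r_o^2\tau_t^\delta$; for the latter I would invoke the displacement theorem to replace the correlated distance $r_{\tilde x o}$ by a free radial variable $w$, reducing it to $2\pi\int_0^\infty \frac{r_o^\alpha\tau_t}{w^\alpha+r_o^\alpha\tau_t}w\,dw$. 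Hence $J = \kappa r_o^2\tau_t^\delta(1+\rho^\delta) - \int\!\!\int \xi\zeta\,v\,d\theta\,dv$, and everything now hinges on estimating the cross term.

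The upper bound \eqref{pco_upper} is then immediate: since $\xi,\zeta\ge 0$, dropping the nonnegative cross term gives $J\le \kappa r_o^2\tau_t^\delta(1+\rho^\delta)$ (equivalently, bounding $\phi_j\le 1$). For the lower bound \eqref{pco_lower} I would bound the cross term from \emph{above} via $\xi\zeta\le \tfrac12(\xi^2+\zeta^2)$ and evaluate the two second-moment integrals in closed form. Each is again a homogeneous-PPP integral but with a squared kernel, e.g.\ $\int\!\!\int \zeta^2 v\,d\theta\,dv = 2\pi\int_0^\infty\big(\tfrac{r_o^\alpha\tau_t}{w^\alpha+r_o^\alpha\tau_t}\big)^2 w\,dw$; the substitution $w^\alpha=r_o^\alpha\tau_t\,t$ turns this into the Beta integral $B(\delta,2-\delta)$, and the factor $\Gamma(2-\delta)=(1-\delta)\Gamma(1-\delta)$ produces the crucial $(1-\delta)$, giving $\int\!\!\int \zeta^2 v\,d\theta\,dv=(1-\delta)\kappa r_o^2\tau_t^\delta$ and likewise $\int\!\!\int \xi^2 v\,d\theta\,dv=(1-\delta)\kappa r_o^2\tau_t^\delta\rho^\delta$. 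Therefore $\int\!\!\int \xi\zeta\,v\,d\theta\,dv\le \frac{1-\delta}{2}\kappa r_o^2\tau_t^\delta(1+\rho^\delta)$, whence $J\ge\big(1-\frac{1-\delta}{2}\big)\kappa r_o^2\tau_t^\delta(1+\rho^\delta)=\frac{1+\delta}{2}\kappa r_o^2\tau_t^\delta(1+\rho^\delta)$, which is exactly \eqref{pco_lower}.

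The routine steps are the first-moment integrals and the Beta substitution; the real work, and where the precise constant is won or lost, is the lower bound. Two points are essential: (i) one must choose the cross-term inequality $\xi\zeta\le\frac12(\xi^2+\zeta^2)$ rather than the cruder $\xi\zeta\le\min(\xi,\zeta)$, since the latter would only produce the weaker factor $\frac12$ in place of $\frac{1+\delta}{2}$; and (ii) one must evaluate the second moment of the transmitter factor, whose integrand depends on the correlated distance $r_{\tilde x o}$, so that decoupling it through the displacement theorem (reducing it to the free radial Beta integral) is the technical crux. I would finish by verifying that the monotonicity of $p^{\mathbf S}_{co}$ in $J$ is applied with the correct orientation, so that the upper and lower bounds on $J$ map to \eqref{pco_upper} and \eqref{pco_lower} respectively.
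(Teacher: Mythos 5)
Your proof is correct and follows essentially the same route as the paper's own proof (which is deferred to [Th.~2] of the cited reference Zheng2017Safeguarding): write the PGFL integrand as $\xi+\zeta-\xi\zeta$, evaluate the first moments via translation invariance, drop the nonnegative cross term for the upper bound, and bound the cross term by $\xi\zeta\le\tfrac12(\xi^2+\zeta^2)$ together with the second-moment integrals $(1-\delta)\kappa(\cdot)^{\delta}$ for the lower bound, which reproduces exactly the constants $1+\rho^{\delta}q$ in \eqref{pco_upper} and $1+\tfrac{(1+\delta)\rho^{\delta}-(1-\delta)}{2}q$ in \eqref{pco_lower}. The only caveat is that the printed expression of ${L}_{I^{\text{FD}}}$ in Theorem~\ref{pco_exact_theorem} omits the factor $\rho$ in the jammer term (a typo), which you correctly reinstate in order to obtain the $\rho^{\delta}$ dependence.
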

%\begin{proof}
%	Please refer to Appendix \ref{appendix_pco_bound_theorem}.
%\end{proof}

Theorem \ref{pco_bound_theorem} shows that both bounds for the connection outage probability increase exponentially in $\eta$, $q$ and $\lambda_l$, because of the increase of self- and mutual-interference.
The relationships between the connection outage probability $p^{\text{FD}}_{co}$ and parameters $q$ and $\eta$ are validated in Fig. \ref{PCO}, where  the results labeled by $\eta=0$ also refer to an HD counterpart.
We observe that, although $p^{\text{FD}}_{co}$ increases as $q$ increases, the effect is not very remarkable when $\eta$ is large.
This is because, in
the large $\eta$ region the self-interference perceived at an FD receiver dominates the interference (including both undesired and jamming signals) from the other network nodes.

\begin{figure}[!t]
	\centering
	\includegraphics[width=3.0in]{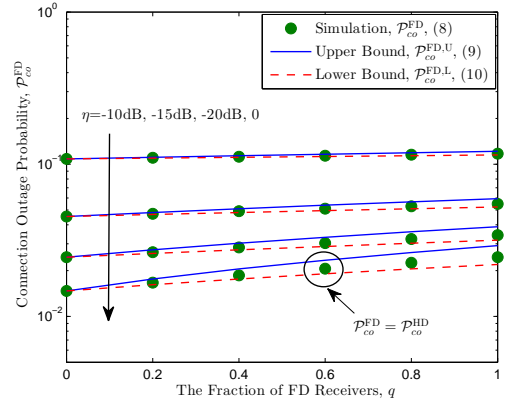}
	\caption{Connection outage probability vs. $q$ for different values of $\eta$, with $\alpha=4$, $\rho=1$, $r_o=1$, $\lambda_l = 3\times 10^{-3}$ and $\tau_t = 1$.}
	\label{PCO}
\end{figure}

\subsection{Secrecy Outage Probability}
The secrecy outage probability of a typical $\mathbf{S}$-link is defined as the complement of the probability that any eavesdropper's SIR $\gamma^{\mathbf{S}}_{e}$ falls below an SIR threshold $\tau_e\triangleq2^{R_e}-1$, i.e.,
	\begin{equation}\label{pso_fd_def}
	p^{\mathbf{S}}_{so}\triangleq 1-\mathbb{E}_{\Phi^{\text{FD}}}\mathbb{E}_{\Phi_e}
	\left[\prod_{e\in\Phi_e}\mathrm{Pr}\left\{
	\gamma^{\mathbf{S}}_{e}
	<\tau_{e}|\Phi_e,\Phi^{\text{FD}}\right\}\right].
	\end{equation}
	To calculate exact $p^{\mathbf{S}}_{so}$ is very difficult. Instead, we give an upper bound for $p^{\mathbf{S}}_{so}$ in the following theorem.
	Please refer to \cite[Th. 3]{Zheng2017Safeguarding} for a detailed proof.
	
	\begin{theorem}\label{pso_exact_theorem}
		Secrecy outage probability $p^{\mathbf{S}}_{so}$ of a typical $\mathbf{S}$-link is upper bounded by
		\begin{equation}\label{pso_exact}
		p^{\mathbf{S},\mathrm{U}}_{so} = 1 - \exp\left(
		-\lambda_e\sum_{n=0}^{N_e-1}
		\sum_{i=0}^{\min(n,1)}
		\frac{\left(\kappa\lambda^{\text{FD}}
			\rho^{\delta}\tau_e^{\delta}\right)
			^{n-i}}
		{(n-i)!}\Xi_{n,i}^{\mathbf{S}}\right),
		\end{equation}
		where $\Xi_{n,i}^{\mathbf{S}}=\int_0^{\infty}\int_0^{2\pi}
		\Lambda^{\mathbf{S}}_{i,\theta,v}v^{2(n-i)}
		e^{-\kappa\lambda^{\text{FD}}
			\rho^{\delta}\tau_e^{\delta}v^2}d\theta vdv$ with $\Lambda^{\mathbf{S}}_{i,\theta,v}=\frac{\left(\mathbf{1}_{\mathrm{FD}}(\mathbf{S})\rho\tau_e\left({v}/
			{\sqrt{v^2+r_o^2-2vr_o\cos\theta}}\right)^{\alpha}\right)^i}
		{1+\mathbf{1}_{\mathrm{FD}}(\mathbf{S})\rho\tau_e\left({v}/
			{\sqrt{v^2+r_o^2-2vr_o\cos\theta}}\right)^{\alpha}}$.
	\end{theorem}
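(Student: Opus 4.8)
The plan is to reduce the secrecy outage probability in \eqref{pso_fd_def} to a single-eavesdropper success probability, to bound this by Jensen's inequality, and then to evaluate the resulting MMSE success probability by conditioning on the interference field. First I would exploit that, conditioned on the jammer field $\Phi^{\text{FD}}$, the eavesdroppers in $\Phi_e$ carry mutually independent fading vectors $\bm g_{\tilde{o}e}$ and $\bm g_{xe}$, so the events $\{\gamma_e^{\mathbf{S}}<\tau_e\}$ are conditionally independent across $e$ and each factor in \eqref{pso_fd_def} is a function of the location $e$ and of $\Phi^{\text{FD}}$ only. Applying the probability generating functional of the Poisson process $\Phi_e$ collapses the product into an exponential,
\begin{equation}
p_{so}^{\mathbf{S}} = 1-\mathbb{E}_{\Phi^{\text{FD}}}\left[\exp\left(-\lambda_e\int_{\mathbb{R}^2}\mathrm{Pr}\left\{\gamma_e^{\mathbf{S}}\ge\tau_e\mid e,\Phi^{\text{FD}}\right\}de\right)\right],
\end{equation}
where the inner probability is already averaged over all fading.

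Since $\exp(-x)$ is convex and the exponent is nonnegative, Jensen's inequality lets me move $\mathbb{E}_{\Phi^{\text{FD}}}$ inside the exponent, replacing the conditional success probability by its $\Phi^{\text{FD}}$-average and producing the upper bound
\begin{equation}
p_{so}^{\mathbf{S}}\le 1-\exp\left(-\lambda_e\int_{\mathbb{R}^2}\mathrm{Pr}\left\{\gamma_e^{\mathbf{S}}\ge\tau_e\right\}de\right).
\end{equation}
This is the \emph{only} inequality in the argument; everything afterwards is an exact evaluation of the fully averaged per-eavesdropper success probability $\mathrm{Pr}\{\gamma_e^{\mathbf{S}}\ge\tau_e\}$, in which the typical receiver at the origin is a deterministic interferer at distance $r_{oe}$ while the remaining jammers form the PPP $\Phi^{\text{FD}}\setminus o$.

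Next I would compute $\mathrm{Pr}\{\gamma_e^{\mathbf{S}}\ge\tau_e\}$ for the MMSE receiver. Conditioned on $\bm R_e^{\mathbf{S}}$ the SIR is the quadratic form $P_t r_{\tilde{o}e}^{-\alpha}\bm g_{\tilde{o}e}^H(\bm R_e^{\mathbf{S}})^{-1}\bm g_{\tilde{o}e}$ in the desired channel $\bm g_{\tilde{o}e}$. I would peel off the deterministic rank-one self-interference term $\bm R_{oe}$, present only when $\mathbf{S}=\mathrm{FD}$, by the Sherman--Morrison matrix inversion lemma; averaging over the self-interference channel $\bm g_{oe}$ then yields exactly the two contributions indexed by $i\in\{0,1\}$ together with $\Lambda_{i,\theta,v}^{\mathbf{S}}$, whose argument $\rho\tau_e(v/\sqrt{v^2+r_o^2-2vr_o\cos\theta})^{\alpha}$ is the (scaled) self-interference-to-signal power ratio, with $v=r_{\tilde{o}e}$ and $r_{oe}=\sqrt{v^2+r_o^2-2vr_o\cos\theta}$ by the law of cosines. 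Averaging the remaining quadratic form over the PPP background via its Laplace functional and invoking the $N_e$-antenna MMSE SIR distribution produces the finite incomplete-gamma (Poisson-tail) series $\sum_{n}\frac{(\kappa\lambda^{\text{FD}}\rho^{\delta}\tau_e^{\delta}v^2)^{n-i}}{(n-i)!}e^{-\kappa\lambda^{\text{FD}}\rho^{\delta}\tau_e^{\delta}v^2}$, the cutoff $n\le N_e-1$ reflecting the $N_e$ spatial degrees of freedom and the exponent being precisely the single-antenna Laplace exponent of the jamming interference evaluated at $s=\tau_e r_{\tilde{o}e}^{\alpha}/P_t$. Rewriting $\int_{\mathbb{R}^2}de$ in polar coordinates $(v,\theta)$ absorbs the spatial integral into $\Xi_{n,i}^{\mathbf{S}}$ and completes the bound.

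The hard part is this last step: evaluating the complementary CDF of the MMSE SIR when the interference covariance $\bm R_e^{\mathbf{S}}$ is a random matrix built from a Poisson field of rank-one Rayleigh contributions. The conditional Laplace transform of the quadratic form is $1/\det(\bm I+\xi(\bm R_e^{\mathbf{S}})^{-1})$, and the obstacle is to invert it and average over $\Phi^{\text{FD}}$ so that the matrix structure collapses to the scalar exponent $\kappa\lambda^{\text{FD}}\rho^{\delta}\tau_e^{\delta}v^2$ with a finite sum truncated at $N_e-1$. Rotational invariance of the desired and interferer channels, together with the factorization afforded by the Laplace functional of the PPP, are the tools that make this collapse tractable; keeping the deterministic origin term separate throughout is what generates the $\min(n,1)$ coupling between the two summation indices.
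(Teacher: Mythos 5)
Your skeleton matches the proof the paper relies on (the paper itself defers to [Th.~3] of the companion work \cite{Zheng2017Safeguarding}): conditional independence of the eavesdroppers' decoding events given $\Phi^{\text{FD}}$, the probability generating functional of $\Phi_e$ to collapse the product into an exponential, Jensen's inequality as the single inequality (in the correct direction, since $p_{so}^{\mathbf{S}} = 1-\mathbb{E}[e^{-X}] \le 1-e^{-\mathbb{E}[X]}$ by convexity of $e^{-x}$), and then an exact evaluation of the spatially averaged per-eavesdropper MMSE success probability with the origin jammer kept separate from the PPP $\Phi^{\text{FD}}\setminus o$. All of your structural claims are correct: the Poisson-tail truncation at $N_e-1$, the exponent $\kappa\lambda^{\text{FD}}\rho^{\delta}\tau_e^{\delta}v^2$ being the single-antenna Laplace exponent at $s=\tau_e r_{\tilde o e}^{\alpha}/P_t$, the law-of-cosines identification of $r_{oe}$, and the $i\in\{0,1\}$ split generated by the deterministic origin term.

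The one place you diverge, and where your write-up is not yet a proof, is the evaluation of $\mathrm{Pr}\{\gamma_e^{\mathbf{S}}\ge\tau_e\}$ itself. The established route does not use Sherman--Morrison: it invokes the known closed-form outage law for an $N_e$-antenna MMSE receiver facing finitely many Rayleigh-faded rank-one interferers, namely that, conditioned on the interferers' normalized average powers $\{c_j\}$ (fading already integrated out), the success probability equals $\sum_{n=0}^{N_e-1} e_n(\tau_e c_1,\tau_e c_2,\dots)\big/\prod_j(1+\tau_e c_j)$, where $e_n$ denotes the elementary symmetric polynomial of degree $n$. Splitting off the origin jammer's variable via $e_n(\text{all}) = e_n(\text{PPP}) + \tau_e c_0\, e_{n-1}(\text{PPP})$, together with the extra factor $(1+\tau_e c_0)$ in the denominator, produces exactly $\Lambda_{0,\theta,v}^{\mathbf{S}}$ and $\Lambda_{1,\theta,v}^{\mathbf{S}}$ --- this is precisely where the $\min(n,1)$ coupling comes from --- and a separate PPP identity gives $\mathbb{E}\left[e_m(\text{PPP})\big/\prod_j(1+\tau_e c_j)\right]=\frac{(\kappa\lambda^{\text{FD}}\rho^{\delta}\tau_e^{\delta}v^2)^m}{m!}e^{-\kappa\lambda^{\text{FD}}\rho^{\delta}\tau_e^{\delta}v^2}$. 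Your Sherman--Morrison plan, by contrast, leaves the crucial collapse (``averaging over $\bm g_{oe}$ then yields exactly the two contributions indexed by $i$'') as an assertion: after the rank-one update, the quadratic form contains a cross term $|\bm g_{\tilde o e}^H\bm A^{-1}\bm g_{oe}|^2$ coupled to the PPP-dependent matrix $\bm A$, and averaging it jointly over $\bm g_{oe}$, $\bm g_{\tilde o e}$, the interferers' fading, and $\Phi^{\text{FD}}$ is exactly the hard computation that the symmetric-polynomial lemma exists to avoid. So your approach is the right one and the intermediate claims are all consistent with the final expression, but to close the argument you either need to actually carry out that joint average or simply import the MMSE outage lemma plus the PPP averaging identity, which is what the cited proof does.
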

%	\begin{proof}
%		Please refer to Appendix \ref{appendix_pso_exact_theorem}.
%	\end{proof}
	
	In the following sections, we use upper bound $p^{\mathbf{S},\mathrm{U}}_{so}$ to replace exact $p^{\mathbf{S}}_{so}$, not simply for a tractable analysis but also for the following two reasons: on one hand, $p^{\mathbf{S},\mathrm{U}}_{so}$ provides a pessimistic evaluation of secrecy performance, which actually benefits a robust design; on the other hand, as \cite{Zhou2011Throughput} shows, $p^{\mathbf{S},\mathrm{U}}_{so}$ will converge to $p^{\mathbf{S}}_{so}$ at the low secrecy outage probability regime, and a low secrecy outage probability is expected in order to guarantee a high level of secrecy.

Clearly, since $\mathbf{1}_{\mathrm{FD}}(\text{HD})=0$, we have $\Lambda^{\text{HD}}_{0,\theta,v}=1$ and $\Lambda^{\text{HD}}_{i,\theta,v}=0$ for $i>0$.
Substituting these results into \eqref{pso_exact} yields a closed-form expression for $p^{\text{HD}}_{so} $ given below,
\begin{align}\label{pso_hd_exact}
p^{\text{HD}}_{so} &= 1 - \exp\Bigg(
-\pi\lambda_e\sum_{n=0}^{N_e-1}
\frac{\left(\kappa\lambda^{\text{FD}}
	\rho^{\delta}\tau_e^{\delta}\right)
	^{n}}
{n!}\int_0^{\infty}
 v^{2n}\times\nonumber\\
&e^{-\kappa\lambda^{\text{FD}}\rho^{\delta}\tau_e^{\delta}v^2}dv^2\Bigg)=1-e^{-\frac{\pi\lambda_eN_e}{\kappa q\lambda_l\rho^{\delta}\tau_e^{\delta}}},
\end{align}
where the last equality follows from formula
%the formula $\int_0^{\infty}v^{n}e^{-\mu v}dv=({n!})/
%({\mu ^{n+1}})$
\cite[(3.381.4)]{Gradshteyn2007Table}.
As to $p^{\text{FD}}_{so} $, the double integral in \eqref{pso_exact} makes it difficult to analyze.
Given that a single-antenna transmitter in a large-scale ad hoc network usually has low transmit power and very limited coverage, we should set the legitimate link distance $r_o$ sufficiently small (compared with the distance between two nodes that are not in pair) to guarantee both reliability and secrecy.
		In the following, we resort to an asymptotic analysis by letting $r_o\rightarrow 0$ in \eqref{pso_exact} in order to develop useful and tractable insights into the behavior of $p^{\text{FD}}_{so} $.
	The following corollary gives a quite simple approximation for $p^{\text{FD}}_{so} $.
	\begin{corollary}\label{pso_approx_corollary}
		In the small $r_o$ regime, i.e., $r_o\rightarrow 0$,  $p^{\text{FD}}_{so} $ in \eqref{pso_exact} is approximated by
		\begin{equation}\label{pso_approx}
		\tilde p^{\text{FD}}_{so}
		= 1-\exp\left(-\frac{\pi\lambda_eN_e}
		{\kappa q\lambda_l\rho^{\delta}
			\tau_e^{\delta}}\left(
		1-\frac{\rho\tau_e/N_e}{1+\rho\tau_e}\right)\right).
		\end{equation}
	\end{corollary}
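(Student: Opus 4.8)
The plan is to start from the FD specialization of the upper bound in Theorem~\ref{pso_exact_theorem}, i.e., set $\mathbf{1}_{\mathrm{FD}}(\mathbf{S})=1$ in \eqref{pso_exact}, and then push the limit $r_o\to 0$ through the double integral $\Xi^{\text{FD}}_{n,i}$. The crucial observation is that the only dependence on $r_o$ sits inside the geometric ratio $v/\sqrt{v^2+r_o^2-2vr_o\cos\theta}$ appearing in $\Lambda^{\text{FD}}_{i,\theta,v}$; as $r_o\to 0$ this ratio tends to $1$ for almost every $(\theta,v)$, so $\Lambda^{\text{FD}}_{i,\theta,v}$ collapses to the constant $(\rho\tau_e)^i/(1+\rho\tau_e)$, independent of both integration variables. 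Since $0\le\Lambda^{\text{FD}}_{i,\theta,v}\le 1$ and the remaining factor $v^{2(n-i)}e^{-\kappa\lambda^{\text{FD}}\rho^\delta\tau_e^\delta v^2}$ is integrable, dominated convergence justifies interchanging the limit with the integral.

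Once $\Lambda^{\text{FD}}_{i,\theta,v}$ is pulled out as a constant, the integral factorizes. I would carry out the trivial $\theta$-integration to produce a factor $2\pi$, and then evaluate the radial integral via the substitution $u=v^2$, which reduces it to a Gamma integral $\int_0^\infty u^{n-i}e^{-au}\,du=(n-i)!/a^{\,n-i+1}$ with $a\triangleq\kappa\lambda^{\text{FD}}\rho^\delta\tau_e^\delta$ (the factor $\tfrac12$ from $du=2v\,dv$ combining with $2\pi$ to yield $\pi$). Combining, the limiting $\Xi^{\text{FD}}_{n,i}$ equals $\pi\,[(\rho\tau_e)^i/(1+\rho\tau_e)]\,(n-i)!/a^{\,n-i+1}$.

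The pleasant simplification is that, upon multiplying by the prefactor $a^{\,n-i}/(n-i)!$ that already appears in \eqref{pso_exact}, the factorial and all but one power of $a$ cancel, so each summand reduces to the $n$-independent quantity $\pi(\rho\tau_e)^i/[a(1+\rho\tau_e)]$. It then remains to count multiplicities in the double sum $\sum_{n=0}^{N_e-1}\sum_{i=0}^{\min(n,1)}$: the $i=0$ term occurs for every $n$, hence $N_e$ times, while the $i=1$ term occurs only for $n\ge 1$, hence $N_e-1$ times. Summing gives a bracket $N_e/(1+\rho\tau_e)+(N_e-1)\rho\tau_e/(1+\rho\tau_e)=N_e-\rho\tau_e/(1+\rho\tau_e)$, and pulling out $N_e$ while substituting $\lambda^{\text{FD}}=q\lambda_l$ yields exactly \eqref{pso_approx}.

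I do not anticipate a genuinely hard step: the argument is essentially a limit interchange followed by bookkeeping. The one point that warrants care is the justification of the limit passage, since the pointwise convergence of the geometric ratio fails at $v=0$; however this is a null set, and the uniform bound $\Lambda^{\text{FD}}_{i,\theta,v}\le 1$ makes dominated convergence immediate, so the interchange is rigorous. The remaining effort is the multiplicity count and the algebraic collapse, both of which are routine.
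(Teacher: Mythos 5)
Your proposal is correct and follows essentially the same route as the paper's own proof: substitute $r_o\to 0$ so that $\Lambda^{\text{FD}}_{i,\theta,v}$ collapses to the constant $(\rho\tau_e)^i/(1+\rho\tau_e)$, evaluate $\Xi^{\text{FD}}_{n,i}$ as a Gamma integral, and substitute back into \eqref{pso_exact}; your added dominated-convergence justification (noting the failure of pointwise convergence only at the null set $v=0$) makes the limit interchange rigorous where the paper simply asserts it. Your bookkeeping is also the consistent one: the paper's proof states $\Xi^{\text{FD}}_{n,i}$ with a factor $2\pi$, but the factor $\pi$ you obtain (from $2\pi\cdot\tfrac12$ under $u=v^2$) is what actually reproduces \eqref{pso_approx}, matching the paper's own HD computation in \eqref{pso_hd_exact}.
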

	\begin{proof}
		Recalling Theorem \ref{pso_exact_theorem}, plugging $r_o\rightarrow 0$ into $\Lambda^{\text{FD}}_{i,\theta,v}$ yields	$\Lambda^{\text{FD}}_{i,\theta,v}=
		\frac{\left(\rho\tau_e\right)^i}
		{1+\rho\tau_e}$, and thus $\Xi_{n,i}^{\text{FD}}=\frac{2\pi(\rho\tau_e)^i(n-i)!}{(1+\rho\tau_e)}(\kappa\lambda^{\text{FD}}
		\rho^{\delta}\tau_e^{\delta})^{i-n-1}$. Substituting $\Xi_{n,i}^{\text{FD}}$ into \eqref{pso_exact} completes the proof.
	\end{proof}
	
%	\begin{table}
%		\caption{$\Delta_{p_{so}}\triangleq ({p^{\text{FD}}_{so}-\tilde p^{\text{FD}}_{so}})/{p^{\text{FD}}_{so}} $~ vs.~ $r_o~\&~ \Delta\lambda\triangleq \lambda_e/ \lambda_l$}
%		\begin{center}
%			\begin{tabular}{|c|c|c|c|c|c|c|c|}
%				\hline
%				\diagbox {$r_o$}{$\Delta_{p_{so}}( \%)$}{$\Delta\lambda$}
%				&  0.01 & 0.02 & 0.05 & 0.1 & 0.2 &  0.5 &1  \\\hline
%				1 &  0.08 & 0.08 &0.07 &  0.05 & 0.03 &0 & 0   \\\hline
%				5 &  0.12 & 0.12 &0.10 &  0.08 & 0.04 &0.01 & 0   \\\hline
%				10 &0.49 & 0.47 &0.40 &0.31 & 0.17 &0.02 & 0  \\\hline
%				%			20 &  3.52 & 3.40 &3.04 &  2.51 & 1.66 &0.38 & 0.02 & 0  \\\hline
%				%			30 &  6.08 & 5.87 &5.26 &  4.34 & 2.86 &0.65 & 0.03 & 0  \\\hline
%				30 &  3.76 & 3.58 &3.08 &  2.35 & 1.30 &0.15 & 0  \\\hline
%				%			100 &  6.02 & 5.73 &4.92 &  3.75 & 2.06 &0.23 & 0  \\\hline
%			\end{tabular}
%		\end{center}
%		\label{Ps_r}
%	\end{table}
	
We stress that, although Corollary \ref{pso_approx_corollary} is established under the assumption $r_o\rightarrow 0$, it actually applies to more general scenarios.
Fig. \ref{PSO} shows that $\tilde p^{\text{FD}}_{so}$ in \eqref{pso_approx} approximates to $p_{so}^{\mathrm{FD}}$ in \eqref{pso_exact} in quite a wide range of $r_o$ and $\lambda_e$ particularly when $\lambda_f$ is small, which demonstrates high accuracy for the approximation.
Hereafter, unless specified otherwise, we often use this approximation to deal with the secrecy outage probability.

\begin{figure}[!t]
	\centering
	\includegraphics[width=3.0in]{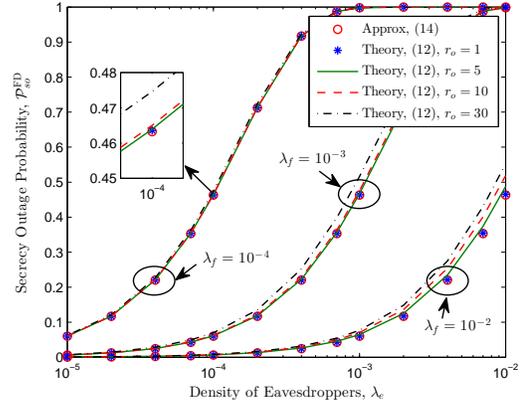}
	\caption{Secrecy outage probability vs. $\lambda_e$ for different values of $r_o$ and $\lambda_f$, with $\alpha=4$, $\rho=10$ and $\tau_e = 1$.}
	\label{PSO}
\end{figure}
	
	Eqn. \eqref{pso_hd_exact} and \eqref{pso_approx} clearly show that secrecy outage probabilities increase exponentially with $\lambda_e$ and $N_e$.
	This is ameliorated by increasing $q$ or $\rho$.
	In addition, secrecy outage probabilities increase as $\alpha$ increases.
	This is because, in a large path-loss exponent environment, jamming signals have undergone a strong attenuation before they arrive at  eavesdroppers.

\section{Area secure link number}

In this section, we maximize ASLN $\bm{N}$ under a given pair of wiretap code rates $R_t$ and $R_s$ by determining the optimal fraction $q$ of FD receivers.

To facilitate a robust  design, we use the upper bounded connection outage probability  $p_{co}^{\mathbf{S},\mathrm{U}}$ given in \eqref{pco_upper}, which actually pessimistically assess the connection performance.
We also suppose eavesdroppers use a large number of antennas in order to do better wiretapping, which also gives a pessimistic evaluation of the secrecy performance.
Resorting to an asymptotic analysis of $p^{\text{FD}}_{so}$ by letting $N_e\gg 1$ in \eqref{pso_approx}, $p^{\text{FD}}_{so}$ shares the same expression as
$p^{\text{HD}}_{so}$ in \eqref{pso_hd_exact}, i.e.,
\begin{equation}\label{pso_large}
p_{so}^{\mathbf{S}}=1-e^{-\frac{\pi\lambda_eN_e}{\kappa q\lambda_l\rho^{\delta}\tau_e^{\delta}}}.
\end{equation}
Substituting \eqref{pco_upper} and \eqref{pso_large} into \eqref{sl_def} yields
\begin{align}\label{sl_exp}
\bm{N}
&=\lambda_l
\left(q e^{-\rho\eta r_o^{\alpha}\tau_t}+1-q\right)
e^{-\kappa r_o^2\tau_t^{\delta}\lambda_l(1+\rho^{\delta}q)-\frac{\pi\lambda_eN_e}{\kappa q\lambda_l\rho^{\delta}\tau_e^{\delta}}}.
\end{align}
Introducing an auxiliary function $ {F}(q)=(qA+(1-q))e^{-Bq-C/q}$ with ${A}\triangleq e^{-\rho\eta r_o^{\alpha}\tau_t}<1$, ${B}\triangleq\kappa r_o^2\tau_t^{\delta}
\rho^{\delta}\lambda_l$ and ${C}\triangleq \frac{\pi\lambda_eN_e}{\kappa \lambda_l\rho^{\delta}\tau_e^{\delta}}$, we have $\bm{N}=\lambda_l e^{-\kappa r_o^2\tau_t^{\delta}\lambda_l} {F}(q)$ such that parameter $q$ only exists in $ {F}(q)$.
Hence, maximizing $ \bm{N}$ is equivalent to maximizing $ {F}(q)$, which can be formulated as
%\begin{subequations}
\begin{align}\label{sl_max_def}
&\max_{q}~  {F}(q)=(q{A}+(1-q))e^{-{B}q-{C}/q},~~\mathrm{s.t.}~ 0<q\le 1.
\end{align}
%\end{subequations}
%It seems that solving problem \eqref{sl_max_def} requires a one dimensional exhaustive searching.
In the following theorem,
we prove the quasi-concavity \cite[Sec. 3.4.2]{Boyd2004Convex} of $ {F}(q)$ in $q$, and give the optimal solution of problem \eqref{sl_max_def}.
\begin{theorem}\label{opt_q_sl_theorem}
	The optimal fraction of FD receivers that maximizes ASLN $\bm{N}$ is given by
	\begin{equation}\label{opt_q_sl}
	q_{sl}^* =
	\begin{cases}
	1,&  {\pi\lambda_eN_e}
	> {\kappa\lambda_l\rho^{\delta}\tau_e^{\delta}}(1/A+B-1),\\
	q_{sl}^{\circ},& \text{otherwise},
	\end{cases}
	\end{equation}
	where $q_{sl}^{\circ}$ is the unique root $q$ of the following equation
	\begin{equation}\label{opt_sl_q_exp}
	\left(A+q^{-1}-1\right)\left(1+{C}{q^{-1}}-Bq\right)-q^{-1}=0.
	\end{equation}
	The left-hand side (LHS) of \eqref{opt_sl_q_exp} is initially positive and then negative when $C\le 1/A+B-1$; and thus, $q_{sl}^{\circ}$ can be efficiently calculated using the bisection method with \eqref{opt_sl_q_exp}.	
\end{theorem}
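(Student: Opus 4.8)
The plan is to reduce the constrained problem \eqref{sl_max_def} to a one-dimensional unimodality argument by passing to logarithms. First I would note that on $(0,1]$ the linear factor $qA+(1-q)=1-(1-A)q$ is strictly positive (it is at least $A>0$), so $F(q)>0$ and $\ln F(q)$ is well defined. Writing
\begin{equation*}
\ln F(q)=\ln\!\big(1-(1-A)q\big)-Bq-\frac{C}{q},
\end{equation*}
I would observe that each summand is concave on $(0,1]$: $\ln\!\big(1-(1-A)q\big)$ is the logarithm of a positive affine function, and $-Bq-C/q$ has second derivative $-2C/q^{3}<0$. Hence $\ln F$ is strictly concave, so $F$ is log-concave and therefore quasi-concave, which establishes the first assertion of the theorem.

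The key consequence I would exploit is that $(\ln F)'(q)=\tfrac{-(1-A)}{1-(1-A)q}-B+\tfrac{C}{q^{2}}$ is strictly decreasing, so it vanishes at most once on $(0,1]$ and changes sign only from positive to negative. Since $F'(q)$ has the same sign as $(\ln F)'(q)$, the location of the maximizer is settled by the sign of this derivative at the two endpoints. As $q\to0^{+}$ the term $C/q^{2}$ dominates and $(\ln F)'(q)\to+\infty$; at the right endpoint one gets $(\ln F)'(1)=C-B-\tfrac{1-A}{A}$, whose positivity is exactly the stated inequality $\pi\lambda_eN_e>\kappa\lambda_l\rho^{\delta}\tau_e^{\delta}(1/A+B-1)$, i.e.\ $C>1/A+B-1$. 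I would then conclude the dichotomy: if this holds, $(\ln F)'$ stays positive on all of $(0,1]$, $F$ is increasing, and the optimum is the boundary point $q^*_{sl}=1$; otherwise $(\ln F)'$ is positive at $0^{+}$ and nonpositive at $1$, so by strict monotonicity it has a unique interior zero $q^{\circ}_{sl}$, which is the global maximizer \eqref{opt_q_sl}.

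Finally, to match the critical-point condition \eqref{opt_sl_q_exp}, I would clear the denominator in $(\ln F)'(q)=0$. Using the identity $A+q^{-1}-1=q^{-1}\big(1-(1-A)q\big)$, the left-hand side of \eqref{opt_sl_q_exp} collapses by direct expansion to $-(1-A)+\big(1-(1-A)q\big)\big(Cq^{-2}-B\big)$, which is precisely $g'(q)+g(q)h'(q)$ for $g(q)=1-(1-A)q$ and $h(q)=-Bq-C/q$; equivalently it equals $g(q)\,(\ln F)'(q)$. Because $g(q)>0$ on $(0,1]$, this expression inherits the sign of $(\ln F)'(q)$, hence is positive before $q^{\circ}_{sl}$ and negative after it, justifying the bisection claim. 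The main obstacle is conceptual rather than computational: the correct structural property to use is the \emph{log}-concavity of $F$ (a sum of elementary concave functions), not concavity of $F$ itself, which need not hold; once log-concavity is identified, the strict monotonicity of $(\ln F)'$ delivers both quasi-concavity and uniqueness of the stationary point with no further estimation, and the endpoint evaluation produces the clean threshold on $\pi\lambda_eN_e$.
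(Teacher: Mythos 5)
Your proof is correct, and it takes a genuinely different route from the paper's. The paper works directly with $F^{'}(q)=K(q)e^{-Bq-C/q}$, where $K(q)$ is exactly the LHS of \eqref{opt_sl_q_exp}: it evaluates the sign of $K$ at the endpoints, handles the case $K(1)>0$ by an inequality chain showing $K(q)> A(1+C-B)-1>0$ on all of $(0,1]$, and handles $K(1)<0$ by a local second-order test --- at any root $q_o$ of $K$ it uses the equation $K(q_o)=0$ to eliminate $A$, deduces $C/q_o>Bq_o$, concludes $F^{''}(q_o)<0$, and then invokes the definition of quasi-concavity. You instead establish strict concavity of $\ln F$ by decomposing it into the concave summands $\ln\bigl(1-(1-A)q\bigr)$ and $-Bq-C/q$, so that $(\ln F)^{'}$ is strictly decreasing; uniqueness of the stationary point and the positive-to-negative sign pattern then follow at once, and the dichotomy is settled by the single endpoint evaluation $(\ln F)^{'}(1)=C-B-(1-A)/A$, which reproduces the paper's threshold exactly, since $K(1)=A(1+C-B)-1=A\cdot(\ln F)^{'}(1)$ has the same sign. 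Your identity $K(q)=\bigl(1-(1-A)q\bigr)(\ln F)^{'}(q)$ ties your stationarity condition cleanly to \eqref{opt_sl_q_exp} and justifies the bisection claim. What your route buys is rigor and economy: the paper's test $F^{''}(q_o)<0$ at each critical point implicitly needs a further continuity argument to exclude multiple critical points, whereas strict monotonicity of $(\ln F)^{'}$ makes uniqueness automatic and requires no algebraic manipulation of the critical-point equation. What the paper's route buys is that it never leaves the algebraic form of \eqref{opt_sl_q_exp}, which is the function on which the bisection is actually run.
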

\begin{proof}
	Please refer to Appendix \ref{appendix_opt_q_sl_theorem}.
\end{proof}

Theorem \ref{opt_q_sl_theorem} indicates that as eavesdropper density $\lambda_e$ or eavesdropper antenna number $N_e$ is sufficiently large such that ${\pi\lambda_eN_e}
> {\kappa\lambda_l\rho^{\delta}\tau_e^{\delta}}(1/A+B-1)$, all legitimate receivers should work in the FD mode; otherwise a portion of HD receivers are permitted, just as depicted in Fig. \ref{OPT_Q_SL}.

\begin{figure}[!t]
	\centering
	\includegraphics[width=3.0in]{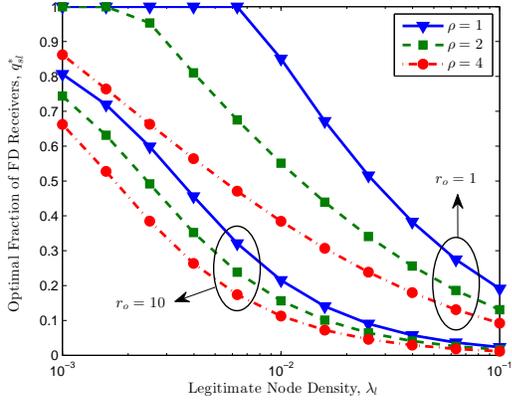}
	\caption{The optimal fraction of FD receivers that maximizes ASLN $\bm{N}$ vs. $\lambda_l$ for different values of $r_o$ and $\rho$, with $\alpha=4$,  $\lambda_e = 10^{-3}$, $N_e=6$, $\eta=-10$dB, $\tau_t=2$ and $\tau_e = 1$.}
	\label{OPT_Q_SL}
\end{figure}

Although it is difficult to provide an explicit expression for the optimal  $q_{sl}^{\circ}$ given in \eqref{opt_q_sl}, we are still able to develop some insights into the behavior of $q_{sl}^{\circ}$ in the following corollary.
\begin{corollary}\label{corollary_q_sl}
	The optimal $q_{sl}^{\circ}$ given in \eqref{opt_q_sl} monotonically increases with $\lambda_e$ and $N_e$, and monotonically decreases with $\lambda_l$, $r_o$, $\eta$, $\rho$, $\tau_t$ and $\tau_e$.
	In the perfect SIC case, i.e., $\eta=0$, a closed-form expression on $q_{sl}^{\circ}$ can be further given by
	\begin{equation}\label{q_sl_sic}
	q^{\circ,\eta=0}_{sl}
	=\sqrt{\frac{C}{B}}
	= \frac{1}{\kappa\lambda_l\rho^\delta r_o}\sqrt{\frac{\pi\lambda_eN_e}{\tau_t^\delta\tau_e^\delta}},
	\end{equation}
	where $q^{\circ,\eta=0}_{sl}$ decreases linearly in $\lambda_l$ and $r_o$, and increases linearly in $\sqrt{\lambda_e}$ and $\sqrt{N_e}$.
\end{corollary}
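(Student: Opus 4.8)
The plan is to handle the two assertions separately: first the closed form under perfect self-interference cancellation (SIC), then the monotonicity of the implicitly defined root $q_{sl}^{\circ}$.

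For the perfect-SIC case I would set $\eta=0$, so that $A=e^{0}=1$, and substitute into the optimality equation \eqref{opt_sl_q_exp}. The factor $A+q^{-1}-1$ then collapses to $q^{-1}$, and \eqref{opt_sl_q_exp} becomes $q^{-1}(1+Cq^{-1}-Bq)-q^{-1}=q^{-1}(Cq^{-1}-Bq)=0$. Since $q^{-1}>0$, this forces $C/q=Bq$, i.e. $q=\sqrt{C/B}$. Plugging in $B=\kappa r_o^2\tau_t^{\delta}\rho^{\delta}\lambda_l$ and $C=\pi\lambda_eN_e/(\kappa\lambda_l\rho^{\delta}\tau_e^{\delta})$ and simplifying yields the stated $q^{\circ,\eta=0}_{sl}=\frac{1}{\kappa\lambda_l\rho^{\delta}r_o}\sqrt{\pi\lambda_eN_e/(\tau_t^{\delta}\tau_e^{\delta})}$. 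This step is pure algebra.

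For the monotonicity claims I would use implicit differentiation of \eqref{opt_sl_q_exp}. Write $G(q)\triangleq(A+q^{-1}-1)(1+Cq^{-1}-Bq)-q^{-1}$ for its left-hand side, so $q_{sl}^{\circ}$ solves $G=0$. Theorem \ref{opt_q_sl_theorem} already establishes that $G$ passes from positive to negative at $q_{sl}^{\circ}$, hence $\partial G/\partial q<0$ there. By the implicit function theorem, for any parameter $\theta$ the derivative $\partial q_{sl}^{\circ}/\partial\theta=-(\partial G/\partial\theta)/(\partial G/\partial q)$ shares the sign of $\partial G/\partial\theta$. It therefore suffices to sign the partials of $G$ with respect to the surrogate constants $A$, $B$, $C$ and then chain through to the physical parameters via $\partial q_{sl}^{\circ}/\partial\theta=-(\partial G/\partial q)^{-1}\sum_{X\in\{A,B,C\}}(\partial G/\partial X)(\partial X/\partial\theta)$.

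The key preliminary observation is that at the root both bracketed factors are positive: since $q\in(0,1]$ we have $q^{-1}-1\ge0$ and $A>0$, so $A+q^{-1}-1>0$; and because $G=0$ forces $(A+q^{-1}-1)(1+Cq^{-1}-Bq)=q^{-1}>0$, positivity of the first factor forces $1+Cq^{-1}-Bq>0$ as well. Consequently $\partial G/\partial A=1+Cq^{-1}-Bq>0$, $\partial G/\partial C=q^{-1}(A+q^{-1}-1)>0$, and $\partial G/\partial B=-q(A+q^{-1}-1)<0$, so $q_{sl}^{\circ}$ rises in $A$ and $C$ and falls in $B$. The last step is to read off from the definitions how each physical parameter enters: $\lambda_e,N_e$ raise only $C$; $\tau_e$ lowers only $C$; $\eta$ lowers only $A$; $\lambda_l$ raises $B$ and lowers $C$; $r_o$ and $\tau_t$ lower $A$ and raise $B$; and $\rho$ lowers $A$, raises $B$, and lowers $C$. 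In every case all contributing terms push $q_{sl}^{\circ}$ in the same direction, so no cancellation occurs and the claimed signs follow; for $r_o$ and $\tau_t$ the $B$-channel alone already gives a strict decrease even when the $A$-channel contribution vanishes at $\eta=0$, which is consistent with the closed form. The only genuinely non-obvious point—and the main thing to verify carefully—is the positivity of $1+Cq^{-1}-Bq$ at the root, which is not apparent a priori but emerges cleanly from the $G=0$ identity; everything else is routine sign-bookkeeping.
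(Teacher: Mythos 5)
Your proof is correct and follows essentially the same route as the paper's: implicit differentiation of the optimality equation with respect to the surrogate constants $A$, $B$, $C$ (showing $\partial G/\partial A>0$, $\partial G/\partial C>0$, $\partial G/\partial B<0$ while $\partial G/\partial q<0$ at the root), chaining the signs through to the physical parameters, and substituting $A=1$ for perfect SIC to get $q=\sqrt{C/B}$. Your derivation of $1+Cq^{-1}-Bq>0$ from the identity $(A+q^{-1}-1)(1+Cq^{-1}-Bq)=q^{-1}$ is a slightly cleaner variant of the paper's rearrangement of the same first-order condition, but the substance is identical.
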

\begin{proof}
	Please refer to Appendix \ref{appendix_corollary_q_sl}.
\end{proof}

Corollary \ref{corollary_q_sl} provides some useful insights into the optimal fraction of FD receivers, which will benefit network design.
For example, more FD receivers are needed to cope with more eavesdroppers or more eavesdropping antennas; whereas adding legitimate nodes or increasing jamming power allows a smaller fraction of FD receivers.
In addition, we should better activate fewer FD receivers when legitimate link distance $r_o$ increases, since the desired signal suffers a greater attenuation and the negative effect of self-interference increases more significantly.
Some of the properties in Corollary \ref{corollary_q_sl} are verified in Fig. \ref{OPT_Q_SL}, and the others are relatively intuitive.

%Considering a perfect SIC case, i.e., $\eta=0$, a closed-form expression of $q_{sl}^{\circ}$ can be given by
%\begin{equation}\label{q_sl_sic}
%q^{\circ,\eta=0}_{sl}
%=\sqrt{\frac{C}{B}}
%= \frac{1}{\kappa\lambda_l\rho^\delta r_o}\sqrt{\frac{\pi\lambda_eN_e}{\tau_t^\delta\tau_e^\delta}}.
%\end{equation}
%Clearly, the properties provided by Corollary \ref{corollary_q_sl} can be easily extracted
%from \eqref{q_sl_sic}.

Having obtained the optimal fraction $q^*_{sl}$ given in \eqref{opt_q_sl}, the maximum ASLN $\bm{N}^*$ can be calculated by plugging $q^*_{sl}$ into \eqref{sl_exp}.
Fig. \ref{NWSL_COMPARE}
%compares the maximum ASLN obtained at $q=q^*_{sl}$ and the ASLNs obtained at $q=0.1$ and $q=0.5$, respectively.
depicts ASLN as a function of $N_e$ and clearly demonstrates the superiority of our optimization scheme over those fixed-$q$ schemes.
For example, the maximum ASLN obtained at $q=q_{sl}^*$ is nearly twice as large as that obtained at $q=0.5$ for a small $N_e $, and is more than twice as large as that obtained at $q=0.1$ for a large $N_e$.
We observe that, as $\rho$ increases, the ASLN obtained at $q=0.5$ becomes smaller in the small $N_e$ region whereas becomes larger in the large $N_e$ region.
The underlying reason is, when $N_e$ is small, the negative impact of jamming signals on legitimate links is larger than that on wiretap links such that fewer FD jammers should be activated; conversely, as $N_e$ increases, the negative effect of jamming signals on wiretap links increases obviously.
In sharp contrast to this, the maximum $\bm{N}^*$ always increases in $\rho$ regardless of $N_e$.
This is because the optimal fraction $q_{sl}^*$ adaptively decreases as $\rho$ increases so as to mitigate the negative effect of jamming signals.

\begin{figure}[!t]
	\centering
	\includegraphics[width=3.0in]{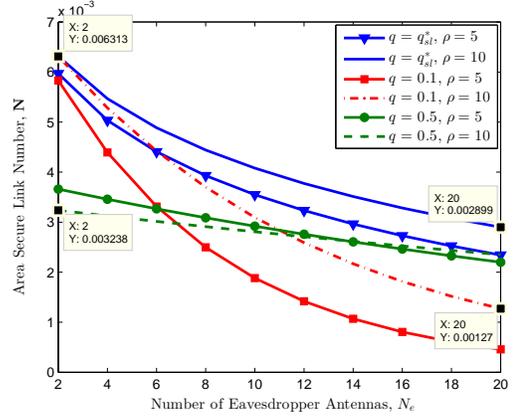}
	\caption{ASLN vs. $N_e$ for different values of $q$ and $\rho$, with $\alpha=3$, $\lambda_l= 10^{-2}$, $\lambda_e = 10^{-3}$, $\eta = -7$dB, $\tau_t=2$ and $\tau_e = 1$.}
	\label{NWSL_COMPARE}
\end{figure}

\section{Network-wide Secrecy Throughput}
In this section, we maximize NST $\mathbf{\Omega}$ under a pair of outage probabilities $\sigma$ and $\epsilon$ by determining the optimal fraction $q$ of FD receivers, which can be formulated as
\begin{align}\label{st_max_def}
\max_{q}~ \mathbf{\Omega},\quad
\mathrm{s.t.}~ 0<q\le 1.
\end{align}
%Note that, we should not set $q$ to zero.
%Otherwise, there is no interference at an eavesdropper and its SIR goes to infinity, since we assume it has the capability of multiuser decoding.
To proceed, we first derive the SIR thresholds $\tau^{\mathbf{S}}_{t}$ and $\tau^{\mathbf{S}}_{e}$ that satisfy $p^{\mathbf{S}}_{co}(\tau^{\mathbf{S}}_{t})=\sigma$
and $p^{\mathbf{S}}_{so}(\tau^{\mathbf{S}}_{e})=\epsilon$, respectively.
We can easily calculate
$\tau^{\text{HD}}_{t}$ from \eqref{pco_upper}; whereas it is in general difficult to derive analytical expressions for $\tau_t^{\text{FD}}$.
However, as reported in \cite{Lee2015Hybrid}, self-interference can be efficiently mitigated by exploiting the propagation domain, analog circuit domain and digital circuit domain; particularly in analog and digital signal processing it is now feasible to have up to 110 dB SIC capability \cite{Bharadia2013Full}.
Such positive news motivates us to consider a perfect SIC case by ignoring self-interference in order to facilitate the design.
Letting $\eta=0$ in \eqref{pco_upper}
and \eqref{pso_large}
yields uniform expressions for $\tau^{\mathbf{S}}_{t}$ and $\tau^{\mathbf{S}}_{e}$, respectively, given by
\begin{equation}\label{beta_t}
\tau^{o}_{t}=\left(\frac{\sigma_o}{\kappa\left(\lambda^{\text{HD}}+[1+\rho^{\delta}]\lambda^{\text{FD}}\right) r_o^2}\right)^{{\alpha}/{2}},
\end{equation}
%We also obtain a uniform expression for $\tau^{\mathbf{S}}_{e}$ from \eqref{pso_large} given by
\begin{equation}\label{beta_e}
\tau^{o}_{e}=\left(\frac{\pi\lambda_eN_e}{\kappa\rho^{\delta}\lambda^{\text{FD}}\epsilon_o}\right)^{{\alpha}/{2}},
\end{equation}
where $\sigma_o\triangleq \ln\frac{1}{1-\sigma}$ and $\epsilon_o\triangleq \ln\frac{1}{1-\epsilon}$.
%However, for the special case $\alpha=4$, the values of $\tau_t^{\text{FD}}$ and $\tau_e^{\text{FD}}$ can be, respectively, given by
%$\tau^{\text{FD},\alpha=4}_{t}=
%\frac{\sqrt{\xi^2+4\rho\eta\ln\frac{1}{1-\sigma}}-\xi }{2\rho\eta r_o^2},$ with
%$\xi\triangleq \kappa\left(\lambda^{\text{HD}}+(1+\rho^{\delta})\lambda^{\text{FD}}\right)$ and
%$\tau^{\text{FD},\alpha=4}_{e}= \frac
% {\left[\left(
% 	\sqrt[3]{12\sqrt{a}-108b}+{b}\right)^2
% 	-{b^2}-12\left(1-\frac{b^2}{3} \right)\right]^2 }
% {{ \rho\left(1-\frac{b^2}{3} \right)}
% 	\left({12\sqrt{a}-108b}\right)^{2/3}
% },$ with $a\triangleq 81\left[\left(\frac{1}{3}-\frac{N_e}{N_e-1} \right )b
% -\frac{2b^3}{27}\right]^2+12\left( 1-\frac{b^2}{3}\right) ^3$ and $b\triangleq\left( {\pi\lambda_e(N_e-1)}\right) /\left({\kappa\lambda^{\text{FD}}	\ln\frac{1}{1-\epsilon}} \right)$.
Substituting $\tau^{\mathbf{S}}_{t}=\tau^{o}_{t}$ and $\tau^{\mathbf{S}}_{e}=\tau^{o}_{e}$ into \eqref{st_def} yields
%\begin{equation}\label{rs_hf}
%	R_s^o=\left[\ln \frac{1+\tau^o_t}{1+\tau^o_e} \right]^+.
%\end{equation}
%substituting which into \eqref{st_def} yields
\begin{align}\label{st_hf}
\mathbf{\Omega}=\lambda_l(1-\sigma) \left[\ln \frac{1+\tau^o_t}{1+\tau^o_e} \right]^+.
\end{align}
%Define $\lambda_a\triangleq \kappa\lambda_lr^2$, $\lambda_b\triangleq \kappa \lambda_lr^2 \left(1+\rho^{\delta}\right)\ge \lambda_a$, and $\lambda_c\triangleq (\kappa \lambda_l\rho^{\delta})/\left( \pi\lambda_e N_e\right)$.
%We can rewrite $\tau^o_t$ and $\tau^o_e$ as $\tau^o_t=\left({(\ln\frac{1}{1-\sigma})}/({\lambda_a +q(\lambda_b-\lambda_a)})\right)^{{\alpha}/{2}}$ and $\tau^o_e=(q\lambda_c\ln\frac{1}{1-\epsilon})^{-\alpha/2}$.
Clearly, to achieve a positive $\mathbf{\Omega}$, we should ensure $\tau^{o}_{t}>\tau^{o}_{e}$, which is equivalent to
\begin{equation}\label{condition2}
q>q_{m}\triangleq
\left(\Delta-1\right)^{-1}\rho^{-\delta},
\end{equation}
where $\Delta\triangleq \frac{\sigma_o\epsilon_o}{\pi\lambda_eN_er_o^2}$.
This indicates, to meet outage probability constraints, a minimum fraction $q_{m}$ must be guaranteed.
Given that $q_{m}<1$, the choice of $\sigma$ and $\epsilon$ should satisfy
\begin{equation}\label{condition1}
\Delta>1+\rho^{-\delta},
\end{equation}
i.e., too small a $\sigma$ and/or too small an $\epsilon$ might not be promised.
In the following, we only consider the non-trivial case of a positive $\mathbf{\Omega}$, i.e., $q>q_{m}$; and thus, maximizing $\mathbf{\Omega}$ in \eqref{st_hf} is equivalent to maximizing $\ln \frac{1+\tau^o_t}{1+\tau^o_e}$.
Recalling \eqref{beta_t}, we introduce the following auxiliary function
\begin{equation}\label{w}
w(q)=\ln\frac{w_1(q)}{w_2(q)},
\end{equation}
where $w_1(q)=1+\beta_1(1+\rho^{\delta}q)^{-\frac{\alpha}{2}}$
with $\beta_1 \triangleq \left(\frac{\sigma_o}{\kappa\lambda_lr_o^2}\right)^{\frac{\alpha}{2}}$, $w_2(q)=1+\beta_2(\rho^\delta q)^{-\frac{\alpha}{2}}$
with $\beta_2 \triangleq \left(\frac{\pi\lambda_eN_e}{\kappa\lambda_l\epsilon_o}\right)^{\frac{\alpha}{2}}$, and $w_1(q)>w_2(q)>1$ for $q\in(q_{m},1]$.
%Recalling \eqref{beta_t} and \eqref{beta_e}, we express $\tau^o_t$ and $\tau^o_e$ as $\tau^o_t=\beta_1(1+\rho^{\delta}q)^{-\alpha/2}$ and $\tau^o_e=\beta_2q^{-\alpha/2}$ with $\beta_1 \triangleq \left(\left({\ln\frac{1}{1-\sigma}}\right)/\left({\kappa\lambda_lr_o^2}\right)\right)^{\alpha/2}$ and $\beta_2 \triangleq \left(\left({\pi\lambda_eN_e}\right)/\left[{\kappa\lambda_l\rho^{\delta}\ln{\frac{1}{1-\epsilon}}}\right]\right)^{\alpha/2}$.
%Thereby, $w(q)$ can be rewritten as
%\begin{equation}\label{w}
%w(q)=
%\ln\left(\frac{1+\beta_1(1+\rho^{\delta}q)^{-\alpha/2}}
%{1+\beta_2q^{-\alpha/2}}\right)
%\end{equation}
Hence, problem \eqref{st_hf} changes to
\begin{align}\label{f_max_def}
\max_{q}~ w(q),\quad
\mathrm{s.t.}~ 0<q_{m}<q\le 1.
\end{align}
Fortunately, we also successfully prove
the quasi-concavity of $w(q)$ in $q$ and provide the optimal solution of problem \eqref{f_max_def} in the following theorem.
\begin{theorem}\label{opt_q_st_theorem}
	The optimal fraction of FD receivers that maximizes the NST $\mathbf{\Omega}$ in \eqref{st_hf} is
	\begin{equation}\label{opt_q_st}
	q_{st}^* =
	\begin{cases}
	~\emptyset, & {\pi\lambda_eN_e}{\epsilon_o^{-1}}\in[ {X},\infty),\\
	~1,& {\pi\lambda_eN_e}{\epsilon_o^{-1}}\in[ {Y}, {X}),\\
	~q_{st}^{\circ},&  {\pi\lambda_eN_e}{\epsilon_o^{-1}}\in(0, {Y}),
	\end{cases}
	\end{equation}
	where $q_{st}^* = \emptyset$ corresponds to an empty feasible region of $q$,
	$ {X}\triangleq \sigma_o/\left(r_o^2\left(1+\rho^{-\delta}\right)\right) $, $ {Y}\triangleq \left(\kappa^{-\alpha/2}\lambda_l^{-\alpha/2}\rho^{-(1+\delta)}+\left(1+\rho^{-\delta}\right) {X}^{-\alpha/2}\right)^{-\delta}< {X}$ \footnote{$ {Y}< \left(\left(1+\rho^{-\delta}\right) {X}^{-\alpha/2}\right)^{-\delta}=\left(1+\rho^{-\delta}\right)^{-\delta} {X}< {X}$.}, and
	$q_{st}^{\circ}$ is the unique root $q$ that satisfies %$1-\frac{1+\rho^{\delta}q+\beta_1^{-1}\left(1+\rho^{\delta}q\right)^{1+\alpha/2}}
	%{\rho^{\delta}q+\beta_2^{-1}\rho^{\delta}q^{1+\alpha/2}
	%}=0$,
	\begin{equation}\label{opt_q_st_exp}
	1-\frac{1+\rho^{\delta}q+\beta_1^{-1}\left(1+\rho^{\delta}q\right)^{1+\alpha/2}}
	{\rho^{\delta}q+\beta_2^{-1}(\rho^{\delta}q)^{1+\alpha/2}
	}=0.
	\end{equation}
	The LHS of \eqref{opt_q_st_exp} is a monotonically increasing function of $q$ in the range $q\in(q_{m},1]$, and is first negative and then positive when ${\pi\lambda_eN_e}{\epsilon_o^{-1}}\in(0, {Y})$; and thus, the value of $q_{st}^{\circ}$ can be efficiently calculated using the bisection method with \eqref{opt_q_st_exp}.
	
\end{theorem}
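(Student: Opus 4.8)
The plan is to reduce the maximization of $w(q)=\ln\frac{w_1(q)}{w_2(q)}$ over $(q_m,1]$ to a sign analysis of $w'(q)$, since $\ln$ is increasing. First I would differentiate, obtaining $w'(q)=\frac{w_1'(q)}{w_1(q)}-\frac{w_2'(q)}{w_2(q)}$. Clearing the negative powers of $1+\rho^\delta q$ and $\rho^\delta q$, a direct computation gives $\frac{w_i'(q)}{w_i(q)}=-\frac{\alpha}{2}\rho^\delta/D_i(q)$, where $D_1(q)=(1+\rho^\delta q)+\beta_1^{-1}(1+\rho^\delta q)^{1+\alpha/2}$ and $D_2(q)=\rho^\delta q+\beta_2^{-1}(\rho^\delta q)^{1+\alpha/2}$ are exactly the numerator and denominator in \eqref{opt_q_st_exp}. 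Hence $w'(q)=\frac{\alpha}{2}\rho^\delta\big(\frac{1}{D_2(q)}-\frac{1}{D_1(q)}\big)$, so $w'(q)$ has the sign opposite to the LHS $\Psi(q)\triangleq 1-D_1(q)/D_2(q)$ of \eqref{opt_q_st_exp}, and the stationarity condition $w'(q)=0$ is equivalent to \eqref{opt_q_st_exp}.

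The crux is to prove that $\Psi(q)$ is strictly increasing on the feasible interval $(q_m,1]$, equivalently that $D_1/D_2$ is strictly decreasing. I would examine the log-derivative $\frac{D_1'(q)}{D_1(q)}-\frac{D_2'(q)}{D_2(q)}$. Introducing the shorthands $s_1\triangleq\beta_1^{-1}(1+\rho^\delta q)^{\alpha/2}$ and $s_2\triangleq\beta_2^{-1}(\rho^\delta q)^{\alpha/2}$, one has $\frac{D_1'}{D_1}=\frac{\rho^\delta[1+(1+\alpha/2)s_1]}{(1+\rho^\delta q)(1+s_1)}$ and $\frac{D_2'}{D_2}=\frac{\rho^\delta[1+(1+\alpha/2)s_2]}{\rho^\delta q\,(1+s_2)}$; after clearing denominators and cancelling, the inequality $\frac{D_1'}{D_1}<\frac{D_2'}{D_2}$ collapses to $1+s_1+(1+\alpha/2)s_2+(1+\alpha/2)s_1s_2+\frac{\alpha}{2}\rho^\delta q\,(s_2-s_1)>0$. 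The only term that could be negative is the last one, and this is the step I expect to be the main obstacle. The decisive observation is that the feasibility condition $w_1(q)>w_2(q)$ on $(q_m,1]$ is \emph{exactly equivalent} to $s_1<s_2$ (both reduce to $\beta_1(\rho^\delta q)^{\alpha/2}>\beta_2(1+\rho^\delta q)^{\alpha/2}$). Thus $s_2-s_1>0$ throughout the feasible region, the whole expression is manifestly positive, and $\Psi$ is strictly increasing. In other words, the same inequality that guarantees a positive throughput is precisely what fixes the sign of the awkward term.

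Given strict monotonicity of $\Psi$, the single-crossing structure of $w'$ follows immediately: $w'$ is positive then negative, so $w$ is unimodal (hence quasi-concave) on $(q_m,1]$ and any interior stationary point is the unique global maximizer. It then remains to carry out the case split. For feasibility I would read off from \eqref{condition2} that $q_m<1\Leftrightarrow \pi\lambda_eN_e\epsilon_o^{-1}<X$; when $\pi\lambda_eN_e\epsilon_o^{-1}\ge X$ the interval $(q_m,1]$ is empty, giving $q_{st}^*=\emptyset$. When the interval is nonempty, I would use $w(q_m)=0<w(q)$ (which holds since $\tau_t^o=\tau_e^o$ at $q_m$) together with the strict monotonicity of $\Psi$ to conclude $\Psi(q_m^+)<0$, so that the location of the maximizer is decided solely by the sign of $\Psi(1)$.

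Finally I would identify the threshold $Y$ by solving $\Psi(1)=0$, i.e. $D_1(1)=D_2(1)$, for $\beta_2^{-1}$ and translating back through $\beta_2=(\pi\lambda_eN_e/(\kappa\lambda_l\epsilon_o))^{\alpha/2}$; using $\delta(1+\alpha/2)=1+\delta$ and $\alpha\delta/2=1$ this yields precisely the stated $Y$. Since $\beta_2$ increases with $\pi\lambda_eN_e\epsilon_o^{-1}$, the quantity $\Psi(1)$ decreases in it, so $\Psi(1)\le0$ (hence $w$ is increasing throughout and $q_{st}^*=1$) iff $\pi\lambda_eN_e\epsilon_o^{-1}\ge Y$, while $\Psi(1)>0$ (hence the unique interior root $q_{st}^\circ$ of \eqref{opt_q_st_exp}, where $\Psi$ passes from negative to positive, is optimal) iff $\pi\lambda_eN_e\epsilon_o^{-1}<Y$. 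The bound $Y<X$ then follows by dropping the first positive summand inside the bracket defining $Y$, which matches the footnote, and the monotone single-crossing of $\Psi$ justifies computing $q_{st}^\circ$ by bisection.
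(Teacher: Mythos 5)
Your proposal is correct and follows essentially the same route as the paper's proof: both reduce the problem to the sign of $w^{'}(q)$, identify that sign with the negative of the LHS of \eqref{opt_q_st_exp} (the paper's auxiliary function $\phi(q)$), prove this LHS is strictly increasing on $(q_m,1]$ using exactly the feasibility condition $w_1(q)>w_2(q)$, note it is negative just above $q_m$, and case-split on its sign at $q=1$ to recover the thresholds $X$ and $Y$. The only differences are in execution: you establish monotonicity via the log-derivative of $D_1/D_2$ reduced to an explicit polynomial inequality in $s_1<s_2$, and you deduce negativity near $q_m$ by contradiction from $w(q_m)=0<w(q)$, whereas the paper differentiates $\phi$ directly and computes $\phi(q_m)=-1/(\rho^{\delta}q_m)$ in closed form.
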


\begin{proof}
	Please refer to Appendix \ref{appendix_opt_q_st_theorem}.
\end{proof}

Theorem \ref{opt_q_st_theorem} shows that when $N_e$ or $\lambda_e$ is sufficiently small such that ${\pi\lambda_eN_e}\epsilon_o^{-1}<{X}$, there exists a unique fraction $q$ that maximizes NST $\mathbf{\Omega}$; otherwise no positive $\mathbf{\Omega}$ can be achieved.

In the following corollary, we provide some insights into the optimal $q^{\circ}_{st}$ given in \eqref{opt_q_st}.
\begin{corollary}\label{corollary_q_st}
	The optimal $q^{\circ}_{st}$ given in \eqref{opt_q_st} monotonically increases with $\lambda_e$, $N_e$ and $r_o$, and monotonically decreases with $\sigma$, $\epsilon$, $\rho$ and $\lambda_l$.
\end{corollary}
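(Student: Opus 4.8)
The plan is to regard $q^{\circ}_{st}$ as an implicitly defined function of each network parameter through the stationarity equation \eqref{opt_q_st_exp} and to read off the sign of its derivative by the implicit function theorem. The enabling fact, already established in Theorem \ref{opt_q_st_theorem}, is that the left-hand side of \eqref{opt_q_st_exp} is strictly increasing in $q$ on $(q_{m},1]$, so its $q$-derivative is strictly positive at the root. The first step is a change of variable that isolates $\rho$: setting $u\triangleq\rho^{\delta}q$ and clearing the positive denominator, condition \eqref{opt_q_st_exp} is equivalent to
\[
\tilde H(u)\triangleq \beta_2^{-1}u^{1+\alpha/2}-\beta_1^{-1}(1+u)^{1+\alpha/2}-1=0 .
\]
Since $\tilde H(u)$ equals the left-hand side of \eqref{opt_q_st_exp} multiplied by the strictly positive denominator $\rho^{\delta}q+\beta_2^{-1}(\rho^{\delta}q)^{1+\alpha/2}$, the two share the same root and the same sign, so at the root $\partial\tilde H/\partial u>0$ follows directly from Theorem \ref{opt_q_st_theorem}.

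Because $\beta_1$ and $\beta_2$ do not contain $\rho$, the root $u^{\ast}$ of $\tilde H(u)=0$ is independent of $\rho$; hence $q^{\circ}_{st}=u^{\ast}\rho^{-\delta}$ is strictly decreasing in $\rho$, settling that case immediately. For the remaining parameters the implicit function theorem gives $\partial u^{\ast}/\partial\beta_1=-(\partial\tilde H/\partial\beta_1)/(\partial\tilde H/\partial u)$ and the analogous expression in $\beta_2$. A one-line computation yields $\partial\tilde H/\partial\beta_1=\beta_1^{-2}(1+u)^{1+\alpha/2}>0$ and $\partial\tilde H/\partial\beta_2=-\beta_2^{-2}u^{1+\alpha/2}<0$, so $u^{\ast}$ (and hence $q^{\circ}_{st}=u^{\ast}\rho^{-\delta}$) strictly decreases in $\beta_1$ and strictly increases in $\beta_2$. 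It then remains to track how each parameter moves $\beta_1=(\sigma_o/(\kappa\lambda_l r_o^2))^{\alpha/2}$ and $\beta_2=(\pi\lambda_e N_e/(\kappa\lambda_l\epsilon_o))^{\alpha/2}$, where $\sigma_o=\ln\frac{1}{1-\sigma}$ and $\epsilon_o=\ln\frac{1}{1-\epsilon}$ are increasing in $\sigma$ and $\epsilon$. Thus $\lambda_e$ and $N_e$ raise only $\beta_2$ (so $q^{\circ}_{st}$ increases); $\epsilon$ lowers only $\beta_2$ (so $q^{\circ}_{st}$ decreases); $\sigma$ raises only $\beta_1$ (so $q^{\circ}_{st}$ decreases); and $r_o$ lowers only $\beta_1$ (so $q^{\circ}_{st}$ increases), matching the stated directions.

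The one delicate case, and the step I expect to be the main obstacle, is $\lambda_l$, since it lowers $\beta_1$ and $\beta_2$ at once and these effects push $u^{\ast}$ in opposite directions, so the sign is not decided by the $\beta$-monotonicities alone. To resolve this I differentiate $\tilde H(u^{\ast};\beta_1(\lambda_l),\beta_2(\lambda_l))=0$ directly in $\lambda_l$. Using $\beta_1,\beta_2\propto\lambda_l^{-\alpha/2}$, i.e. $d\beta_i/d\lambda_l=-\tfrac{\alpha}{2}\beta_i/\lambda_l$, the two parametric terms combine as
\[
\frac{\partial\tilde H}{\partial\beta_1}\frac{d\beta_1}{d\lambda_l}+\frac{\partial\tilde H}{\partial\beta_2}\frac{d\beta_2}{d\lambda_l}=\frac{\alpha}{2\lambda_l}\left(\beta_2^{-1}u^{1+\alpha/2}-\beta_1^{-1}(1+u)^{1+\alpha/2}\right).
\]
The crucial move is to substitute the root constraint $\tilde H(u^{\ast})=0$, namely $\beta_2^{-1}u^{1+\alpha/2}-\beta_1^{-1}(1+u)^{1+\alpha/2}=1$, which collapses the parenthesis and leaves the unambiguously positive $\alpha/(2\lambda_l)$. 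Dividing by $-\partial\tilde H/\partial u<0$ then gives $du^{\ast}/d\lambda_l<0$, so $q^{\circ}_{st}$ decreases in $\lambda_l$, completing all seven monotonicity claims. Every case other than $\lambda_l$ is a routine sign check; the cancellation forced by the root constraint is the only nonroutine ingredient.
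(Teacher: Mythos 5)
Your proposal is correct and takes essentially the same route as the paper's Appendix D: implicit differentiation of the stationarity condition, with the monotonicities in $\lambda_e$, $N_e$, $r_o$, $\sigma$, $\epsilon$ read off through $\beta_1$ and $\beta_2$, and separate arguments for $\rho$ and $\lambda_l$. Your cleared-denominator normal form $\tilde H(u)=0$ in the variable $u=\rho^{\delta}q$ is a clean streamlining of the same idea: it makes the $\rho$ case immediate (an observation the paper itself makes in Section V via $\phi(\rho^{\delta}q^{\circ}_{st})=0$, though its appendix proof uses a derivative argument) and settles the $\lambda_l$ case by the exact cancellation forced by $\tilde H(u^{*})=0$, which is equivalent, at the root, to the paper's differentiation of \eqref{phi_rho_2} together with the inequality $w_1(q)>w_2(q)$.
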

\begin{proof}
	Please refer to Appendix \ref{appendix_corollary_q_st}.
\end{proof}

Corollary \ref{corollary_q_st} indicates that under a moderate constraint on the connection outage probability (a large $\sigma$) or on the secrecy outage probability (a large
$\epsilon$), we should reduce the portion of FD receivers.
This is because, on one hand, reducing FD receivers decreases interference such that greatly benefits legitimate transmissions  especially when a large $\sigma$ is tolerable; on the other hand, if a large $\epsilon$ is tolerable, we need fewer FD jammers against eavesdropping.
It is worth mentioning that the optimal fraction $q^{\circ}_{st}$ increases as $r_o$ increases, which is just the opposite of what we have observed in Corollary \ref{corollary_q_sl}.
The reason behind is that here we have ignored self-interference and meanwhile eavesdroppers who are close to a legitimate transmitter is less impaired by the paired FD receiver as $r_o$ increases, hence more FD receivers are needed.

The aforementioned theoretic results are validated in Fig. \ref{OPT_Q_ST}, where we see that the optimal fraction $q^{\circ}_{st}$ deeply depends on parameters $r_o$ and $\sigma$.  When $r_o$ is large and meanwhile $\sigma$ is small (e.g., $r_o=2$, $\sigma=0.1$) such that condition \eqref{condition1} is violated, there exists no positive NST no matter how the network design allocates FD and HD receivers.
That means in such a legitimate transmission distance, the connection outage probability requirement is too rigorous to satisfy.
In order to achieve a certain level of NST, network design should have to relax the connection outage probability constraint or shorten the legitimate distance.

\begin{figure}[!t]
	\centering
	\includegraphics[width=3.0in]{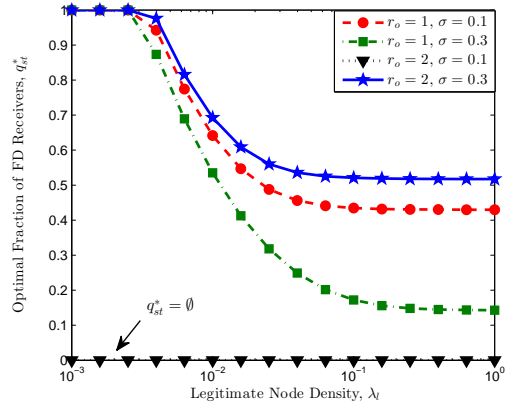}
	\caption{The optimal fraction of FD receivers that maximizes NST $\mathbf{\Omega}$ vs. $\lambda_l$ for different values of $r_o$ and $\sigma$, with $\alpha=4$, $\lambda_e = 10^{-4}$, $N_e=4$, $\rho=2$ and $\epsilon=0.05$.}
	\label{OPT_Q_ST}
\end{figure}

Let us recall \eqref{phi_rho} in Appendix \ref{appendix_corollary_q_st}, it is not difficult to deduce that $q^{\circ}_{st}$ is inversely proportional to $\rho^\delta$, since $\rho^\delta q^{\circ}_{st}$ keeps constant in $\phi(\rho^\delta q^{\circ}_{st})=0$ when the other parameters are fixed.
As a consequence, $q^{\circ}_{st}\rightarrow 0$ as $\rho\rightarrow \infty$.
If we consider a dense network by letting $\lambda_l\rightarrow \infty$ in \eqref{phi_rho_2}, we can further obtain a simple expression for $q_{st}^{\circ}$ given below
\begin{equation}
q_{st}^{\circ,\lambda_l\rightarrow\infty}=\left({\Delta^{{1}/{(1+\delta)}}-1}\right)^{-1}\rho^{-\delta},
\end{equation}
which is independent of $\lambda_l$, just as shown in Fig. \ref{OPT_Q_ST}.
This is different from what we can see in Fig. \ref{OPT_Q_SL} where the optimal $q_{sl}^*$ goes to zero as $\lambda_l$ goes to infinity. This is because a positive secrecy rate mainly depends on the relative interference strength between  legitimate nodes and eavesdroppers as $\lambda_l$ goes to infinity, and a certain portion of FD receivers must be activated to ensure the superiority of the main channel over the wiretap channel in terms of channel quality.
%of the interference to the legitimate node constraints of the connection outage probability and the secrecy outage probability here.

Substituting $q^*_{st}$ given in \eqref{opt_q_st} into \eqref{st_hf}, we obtain the maximum NST $\mathbf{\Omega}^*$.
Fig. \ref{NWST_COMPARE} compares this NST $\mathbf{\Omega}^*$ and those obtained at fixed $q$'s.
%We see that $\mathbf{\Omega}^*$ becomes small under a more rigorous secrecy outage probability constraint, i.e., a small value of $\epsilon$.
Obviously, activating a proper fraction of FD receivers significantly improves NST.
For example, the optimal fraction $q_{st}^*$ increases NST by about 28$\%$ than the equal proportion case (i.e., $q=0.5$), and by up to 1400$\%$ than the small-$q$ case (e.g., $q=0.1$).
We can observe that, too small a  $\sigma$ might not be satisfied while too large a $\sigma$ results in a small successful transmission probability and accordingly small NST.
Therefore, a moderate constraint on the connection outage probability is desirable for improving NST.
Fig. \ref{NWST_COMPARE} also illustrates the influence of the path loss exponent $\alpha$ on NST.
A general trend is that NST increases as $\alpha$ becomes larger.
The reason behind is that the distance between a legitimate transmitter-receiver pair is small such that the signal attenuation in a legitimate link is less significant than it is in the eavesdropper link.
This implies short-range secure communications might prefer a large path-loss exponent, especially in a sparse-eavesdropper environment.

\begin{figure}[!t]
	\centering
	\includegraphics[width=3.0in]{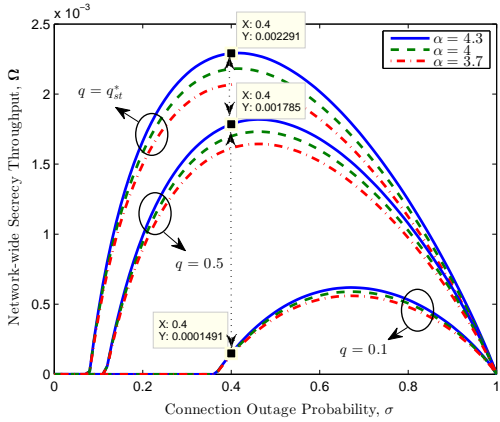}
	\caption{NST vs. $\sigma$ for different values of and $\alpha$, with $\lambda_l=10^{-3}$, $\lambda_e = 10^{-4}$, $N_e=4$, $r_o=1$, $\epsilon=0.01$ and $\rho=1$.}
	\label{NWST_COMPARE}
\end{figure}

\section{Network-wide Secrecy Energy Efficiency}
In this section, we determine the optimal fraction $q$ of FD receivers that maximizes the NSEE $\mathbf{\Psi}$ with and without considering a required minimum NST.
As presented in previous sections, we consider the scenario where self-interference is efficiently canceled.

\subsection{Without NST Constraint}
In this subsection, we ignore the requirement of a minimum NST.
Substituting \eqref{st_hf} into \eqref{ee_def}, the optimization  problem of interest can be formulated as
%\begin{subequations}
\begin{align}\label{ee_max_def}
&\max_{q}~ \mathbf{\Psi}=\frac{\lambda_l(1-\sigma) w(q) }{\lambda_l(P_t+P_c)+q\lambda_lP_j},~~\mathrm{s.t.}~~ 0<q_{m}<q\le 1,
\end{align}
%\end{subequations}
where $w(q)$ and $q_{m}$ have been defined in \eqref{w} and \eqref{f_max_def}, respectively.
Introducing $\rho_c\triangleq \frac{P_j}{P_t+P_c}$ and
the following auxiliary function
\begin{equation}\label{J}
J(q) = \frac{w(q)}{1+\rho_c q},
\end{equation}
the object function $\mathbf{\Psi}$ of problem \eqref{ee_max_def} can be rewritten in the form of $\mathbf{\Psi}=\frac{1-\sigma}{P_t+P_c}J(q)$.
Clearly, maximizing $\mathbf{\Psi}$ is equivalent to maximizing $J(q)$.
In the following theorem, we prove
the quasi-concavity of $J(q)$ in $q$, again, and provide the optimal solution to problem \eqref{ee_max_def}.

\begin{theorem}\label{opt_q_ee_theorem}
	The optimal fraction of FD receivers that maximizes the NSEE $\mathbf{\Psi}$ in \eqref{ee_max_def} is
	\begin{equation}\label{opt_q_ee}
	q_{ee}^* =
	\begin{cases}
	~\emptyset, & {\pi\lambda_eN_e}{\epsilon_o^{-1}}\ge X,\\
	~1,& {\pi\lambda_eN_e}{\epsilon_o^{-1}}<{X}~\&~\frac{W}{w(1)}>\frac{\delta\rho_c}{1+\rho_c},\\
	~q_{ee}^{\circ},&  \text{otherwise},
	\end{cases}
	\end{equation}
	where $W\triangleq 1-{w_2^{-1}(1)}-\frac{1-w_1^{-1}(1)}{1+\rho^{-\delta} }$ and $X$ has been defined in Theorem \ref{opt_q_st_theorem}.
	In \eqref{opt_q_ee}, $q_{ee}^{\circ}$ is the unique root $q$ of the following equation
	\begin{equation}\label{opt_q_ee_exp}
	Q(q)=0,
	\end{equation}
	where $Q(q)=w^{'}(q)(1+\rho_cq)-\rho_cw(q)$ is initially positive and then negative as $q$ increases; and thus the value of $q_{ee}^{\circ}$ can be efficiently computed using the bisection method with \eqref{opt_q_ee_exp}.
\end{theorem}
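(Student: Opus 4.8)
The plan is to reduce the problem to the scalar optimization of the auxiliary ratio $J(q)$ in \eqref{J}, since $\mathbf{\Psi}=\frac{1-\sigma}{P_t+P_c}J(q)$ and the prefactor is a positive constant, so maximizing $\mathbf{\Psi}$ and maximizing $J(q)$ coincide. First I would settle feasibility: the admissible set $(q_m,1]$ is nonempty exactly when $q_m<1$, and a short rearrangement of $q_m=(\Delta-1)^{-1}\rho^{-\delta}$ shows this is equivalent to ${\pi\lambda_eN_e}{\epsilon_o^{-1}}<X$, which is precisely the boundary already isolated in Theorem \ref{opt_q_st_theorem}. When ${\pi\lambda_eN_e}{\epsilon_o^{-1}}\ge X$ the feasible set is empty and $q_{ee}^*=\emptyset$, giving the first case of \eqref{opt_q_ee}.

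Next I would differentiate. Writing $J'(q)=Q(q)/(1+\rho_cq)^2$ with $Q(q)=w'(q)(1+\rho_cq)-\rho_cw(q)$ shows the sign of $J'$ equals the sign of $Q$, so the theorem rests on proving that $Q$ is first positive and then negative on $(q_m,1]$. The endpoints are handled directly. As $q\to q_m^+$ we have $w(q_m)=0$ (there $w_1=w_2$) while $w'(q_m)>0$ (the increasing branch of the quasi-concave $w$ from Theorem \ref{opt_q_st_theorem}), so $Q(q_m^+)=w'(q_m)(1+\rho_cq_m)>0$. At $q=1$ I would establish the identity $w'(1)=W/\delta$; this follows by writing $(\ln w_i)'$ in the compact form carrying $s_i\triangleq 1-w_i^{-1}$ and using $\rho^\delta/(1+\rho^\delta)=1/(1+\rho^{-\delta})$, after which $Q(1)>0\iff w'(1)/w(1)>\rho_c/(1+\rho_c)\iff W/w(1)>\delta\rho_c/(1+\rho_c)$. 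This is exactly the condition separating the $q_{ee}^*=1$ case from the interior case in \eqref{opt_q_ee}.

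The crux, and the step I expect to be the main obstacle, is the single sign change of $Q$, which I would obtain from the sub-lemma that $w'(q)>0$ forces $w''(q)<0$; i.e. $w$ is concave on its increasing branch. To prove it I would set $a\triangleq 1+\rho^\delta q$ and $b\triangleq \rho^\delta q$ (so $a-b=1$ and $0<b/a<1$) and reduce the derivatives to $w'\propto \tfrac{s_2}{b}-\tfrac{s_1}{a}$ and $w''\propto \tfrac{h(s_1)}{a^2}-\tfrac{h(s_2)}{b^2}$ with positive proportionality constants, where $h(s)=s\left[(1+\tfrac{\alpha}{2})-\tfrac{\alpha}{2}s\right]>0$ for $s\in(0,1)$. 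Since $w_1>w_2>1$ on $(q_m,1]$ one has $s_1>s_2$, while $w'>0$ gives $s_2/s_1>b/a$; together these yield $h(s_2)/h(s_1)=\tfrac{s_2}{s_1}\cdot\tfrac{(1+\alpha/2)-\tfrac{\alpha}{2}s_2}{(1+\alpha/2)-\tfrac{\alpha}{2}s_1}>\tfrac{b}{a}>\left(\tfrac{b}{a}\right)^2$, hence $h(s_2)/b^2>h(s_1)/a^2$ and $w''<0$. This single computation being settled, the rest is immediate: at any zero of $Q$ one has $w>0$, so $w'(q)=\rho_cw(q)/(1+\rho_cq)>0$, whence $Q'(q)=w''(q)(1+\rho_cq)<0$; thus $Q$ can cross zero only downward and therefore at most once. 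Combined with $Q(q_m^+)>0$ this shows $Q$ changes sign exactly once from positive to negative, so $J$ is quasi-concave. The three cases of \eqref{opt_q_ee} then follow: an empty feasible set; the boundary maximizer $q=1$ whenever $Q(1)>0$, i.e. $W/w(1)>\delta\rho_c/(1+\rho_c)$; and otherwise the unique interior root $q_{ee}^\circ$ of \eqref{opt_q_ee_exp}, which the established single downward crossing of $Q$ makes computable by bisection.
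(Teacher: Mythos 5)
Your proposal is correct, and its skeleton coincides with the paper's proof: reduce $\mathbf{\Psi}$ to $J(q)$ in \eqref{J}, note $J^{'}(q)=Q(q)(1+\rho_cq)^{-2}$, check the boundary signs ($Q(q_m^+)>0$ because $w(q_m)=0$ and $w^{'}(q_m)>0$; the sign of $Q(1)$ yields exactly the threshold $\frac{W}{w(1)}\gtrless\frac{\delta\rho_c}{1+\rho_c}$, your identity $w^{'}(1)=W/\delta$ being the paper's \eqref{Q_1} with $\alpha/2=1/\delta$), and rest everything on the key lemma that $w^{'}(q)>0$ forces $w^{''}(q)<0$, hence $Q^{'}(q)=w^{''}(q)(1+\rho_c q)<0$ at any zero of $Q$, which is the paper's Lemma \ref{lemma_dQ}. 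Where you genuinely depart is in how that lemma is proved and how the case $Q(1)>0$ is closed. The paper expands $w^{''}(q)$ into the lengthy two-line expression \eqref{dw2} and then substitutes the bound $\frac{w_2(q)-1}{w_1(q)-w_2(q)}>\frac{\rho^{\delta}q}{w_1(q)}$ implied by $w^{'}(q)>0$ to force the bracketed term negative; you instead normalize with $a=1+\rho^{\delta}q$, $b=\rho^{\delta}q$, $s_i=1-w_i^{-1}(q)$, observe $w^{'}\propto \frac{s_2}{b}-\frac{s_1}{a}$ and $w^{''}\propto \frac{h(s_1)}{a^2}-\frac{h(s_2)}{b^2}$ with $h(s)=s\bigl[(1+\alpha/2)-(\alpha/2)s\bigr]$ (indeed $\frac{d}{dq}\bigl(\frac{w_i^{'}}{w_i}\bigr)=\frac{\rho^{2\delta}}{\delta^2}\cdot\frac{s_i[(1+\delta)-s_i]}{a^2 \text{ or } b^2}$ and $s[(1+\delta)-s]=\delta h(s)$), and finish with the ratio chain $\frac{h(s_2)}{h(s_1)}>\frac{s_2}{s_1}>\frac{b}{a}>\bigl(\frac{b}{a}\bigr)^2$, which I have checked is valid since $s_1>s_2$ and $h$'s second factor is decreasing in $s$. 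This is cleaner, shorter, and makes the sign structure transparent, whereas the paper's computation is harder to verify but requires no reparametrization. Additionally, for the case $Q(1)>0$ the paper invokes the quasi-concavity of $w$ from Theorem \ref{opt_q_st_theorem} to conclude $w^{'}>0$ and hence $Q$ decreasing on all of $(q_m,1]$; your argument that $Q$ can only cross zero downward (so $Q(1)>0$ forces $Q>0$ throughout) is self-contained and does not need that external input. Both routes establish the same trichotomy in \eqref{opt_q_ee}.
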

\begin{proof}
	Please refer to Appendix \ref{appendix_opt_q_ee_theorem}.
\end{proof}

In the following corollary, we develop some insights into the behavior of $q^{\circ}_{ee}$ given in \eqref{opt_q_ee}.
\begin{corollary}\label{corollary_q_ee}
	The optimal $q^{\circ}_{ee}$ given in \eqref{opt_q_ee} monotonically increases with $\lambda_e$, $N_e$ and $r_o$, and monotonically decreases with $\sigma$, $\epsilon$ and $\rho$.
\end{corollary}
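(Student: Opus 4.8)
The plan is to treat $q_{ee}^{\circ}$ as an implicitly defined function of each network parameter through the stationarity condition \eqref{opt_q_ee_exp}, and to read off the signs of its derivatives with the implicit function theorem. Write $Q(q;\theta)=w'(q)(1+\rho_cq)-\rho_cw(q)$, where $\theta$ is the parameter under study. By Theorem \ref{opt_q_ee_theorem}, $q_{ee}^{\circ}$ is the unique interior zero of $Q$ and $Q$ passes from positive to negative there, so $\partial Q/\partial q<0$ at $q=q_{ee}^{\circ}$. Implicit differentiation gives $\mathrm{d}q_{ee}^{\circ}/\mathrm{d}\theta=-(\partial Q/\partial\theta)/(\partial Q/\partial q)$, whence $\mathrm{sign}(\mathrm{d}q_{ee}^{\circ}/\mathrm{d}\theta)=\mathrm{sign}(\partial Q/\partial\theta)$ evaluated at $q_{ee}^{\circ}$. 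It then remains to sign $\partial Q/\partial\theta$ for each $\theta$. I would organize the six parameters by how they enter \eqref{w}: $\sigma$ and $r_o$ act only through $\beta_1$; $\lambda_e$, $N_e$ and $\epsilon$ act only through $\beta_2$; and $\rho$ acts through both the factor $\rho^{\delta}$ inside $w$ and the cost ratio $\rho_c$.

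For the $\beta_1$- and $\beta_2$-channels the computation is short and sign-definite. Differentiating $w$ and $w'=\mathrm{d}w/\mathrm{d}q$ in $\beta_1$ at fixed $q$ gives $\partial_{\beta_1}w=u_1/(1+\beta_1u_1)>0$ and $\partial_{\beta_1}w'=u_1'/(1+\beta_1u_1)^2<0$, where $u_1=(1+\rho^{\delta}q)^{-\alpha/2}$ and $u_1'=\mathrm{d}u_1/\mathrm{d}q<0$; hence $\partial_{\beta_1}Q=\partial_{\beta_1}w'\,(1+\rho_cq)-\rho_c\,\partial_{\beta_1}w<0$ as a sum of two negative terms. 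Symmetrically, with $u_2=(\rho^{\delta}q)^{-\alpha/2}$ one finds $\partial_{\beta_2}Q>0$. Since $\beta_1$ increases in $\sigma$ (through $\sigma_o$) and decreases in $r_o$, while $\beta_2$ increases in $\lambda_e$ and $N_e$ and decreases in $\epsilon$ (through $\epsilon_o$), chaining these monotonicities through the sign rule above yields at once that $q_{ee}^{\circ}$ decreases in $\sigma$ and $\epsilon$ and increases in $r_o$, $\lambda_e$ and $N_e$. This settles five of the six claims.

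The parameter $\rho$ is the genuine obstacle, because it enters $Q$ through two competing channels: raising $\rho^{\delta}$ shifts the optimum one way while raising $\rho_c$ shifts it the other, and a term-by-term differentiation of $\partial Q/\partial\rho$ has no definite sign. The device I would use is a reparametrization that decouples them. Setting $t\triangleq\rho^{\delta}q$, note from \eqref{w} that $w_1$ and $w_2$ depend on $q$ and $\rho$ only through $t$, so $w$ is a function $W(t)$ of $t$ alone; moreover $\rho_cq=\frac{P_t}{P_t+P_c}\rho^{1-\delta}t\triangleq mt$, with $m$ depending on $\rho$ alone and strictly increasing in $\rho$ since $1-\delta>0$. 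The objective $J(q)$ in \eqref{J} becomes $W(t)/(1+mt)$, whose stationarity condition reduces to the single-parameter equation $W'(t)/W(t)=m/(1+mt)$. Writing $H(t,m)$ for the left minus right side, quasi-concavity gives $\partial H/\partial t<0$ at the optimal $t^{\circ}$, while $\partial H/\partial m=-(1+mt)^{-2}<0$; the implicit function theorem then gives $\mathrm{d}t^{\circ}/\mathrm{d}m<0$. Finally $q_{ee}^{\circ}=t^{\circ}\rho^{-\delta}$, and as $\rho$ grows both factors shrink, $t^{\circ}$ because $m$ increases and $\rho^{-\delta}$ directly, so $q_{ee}^{\circ}$ is strictly decreasing in $\rho$. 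Collecting the three paragraphs establishes the corollary; the only real work is the $\rho$-channel, and the key is the substitution $t=\rho^{\delta}q$ together with the composite ratio $m=\frac{P_t}{P_t+P_c}\rho^{1-\delta}$, which turns the optimality condition into a one-parameter problem.
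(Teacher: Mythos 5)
Your proposal is correct, and it is worth comparing with the paper's own argument, which uses the same implicit-function-theorem skeleton but differs in two respects. For the five parameters $\sigma,\epsilon,r_o,\lambda_e,N_e$ the paper differentiates the stationarity condition $Q(q^{\circ}_{ee})=0$ with respect to the \emph{function values} $w_1(q^{\circ}_{ee})$ and $w_2(q^{\circ}_{ee})$, shows $dq^{\circ}_{ee}/dw_1<0$ and $dq^{\circ}_{ee}/dw_2>0$, and then chains through the monotonicity of $w_1,w_2$ in those parameters; you instead differentiate with respect to the constants $\beta_1,\beta_2$ from \eqref{w}, which is the cleaner handle (it avoids treating a $q$-dependent quantity as an independent variable) but is substantively the same computation, and your sign bookkeeping $\partial_{\beta_1}Q<0$, $\partial_{\beta_2}Q>0$ checks out (your use of $\partial Q/\partial q<0$ at the optimum is legitimate because the paper's Lemma \ref{lemma_dQ} and the proof of Theorem \ref{opt_q_ee_theorem} give strict negativity of $Q'$ at the zero crossing, not merely a sign change). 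Where you genuinely go beyond the paper is the $\rho$-claim. The paper's proof says only ``similar to Corollary \ref{corollary_q_st}'' and its $w_1,w_2$-chaining cannot settle $\rho$: increasing $\rho$ decreases \emph{both} $w_1$ and $w_2$, which push $q^{\circ}_{ee}$ in opposite directions under the stated derivative signs, and the third channel through $\rho_c$ is not covered by $w_1,w_2$ at all; moreover the trick used for Corollary \ref{corollary_q_st} (that the optimality condition depends on $(q,\rho)$ only through $\rho^{\delta}q$) fails here precisely because of the $1+\rho_c q$ denominator in \eqref{J}. Your substitution $t=\rho^{\delta}q$ together with the observation $\rho_c q=\frac{P_t}{P_t+P_c}\rho^{1-\delta}t=mt$, with $m$ increasing in $\rho$ since $\delta<1$, decouples the competing channels, reduces the problem to a one-parameter stationarity condition $W'(t)/W(t)=m/(1+mt)$, and yields $dt^{\circ}/dm<0$ so that $q^{\circ}_{ee}=\rho^{-\delta}t^{\circ}$ decreases in $\rho$ as a product of two decreasing positive factors. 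This supplies a rigorous proof of the one case the paper's appendix glosses over, at the cost of a somewhat longer argument.
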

\begin{proof}
	Please refer to Appendix \ref{appendix_corollary_q_ee}.
\end{proof}

The properties of $q^{\circ}_{ee}$ follow Corollary \ref{corollary_q_st}.
Fig. \ref{OPT_Q_EE} depicts the optimal fraction $q^{*}_{ee}$ and verifies Corollary \ref{corollary_q_ee} well.
We see that, too small $\sigma$ and $\epsilon$ might not be simultaneously satisfied (e.g., $\sigma=0.1$, $\epsilon=0.01$).
As can be observed, the optimal $q^{*}_{ee}$ keeps large in the small $\rho$ region, and dramatically decreases as $\rho$ increases.
This is because the increase of jamming power provides a relief to the need of FD jammers.
In addition, as $\sigma$ or $\epsilon$ decreases, the feasible region of $\rho$ that produces a positive $\mathbf{\Psi}$ reduces.
This suggests that to meet
more rigorous connection outage and secrecy outage constraints, we should consume more power in sending jamming signals.
\begin{figure}[!t]
	\centering
	\includegraphics[width=3.0in]{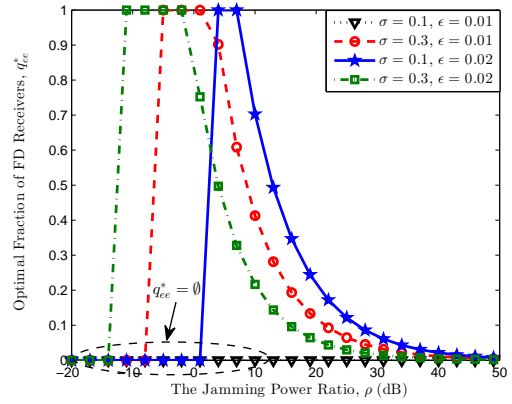}
	\caption{The optimal fraction of FD receivers that maximizes NSEE $\mathbf{\Psi}$ vs. $\rho$ for different values of $\sigma$ and $\epsilon$, with $\alpha=4$, $\lambda_l= 10^{-3}$, $\lambda_e = 10^{-4}$, $N_e=4$ and $r_o=1$.}
	\label{OPT_Q_EE}
\end{figure}

Fig. \ref{NSEE_COMPARE} depicts NSEE versus $\rho$ for different values of $q$.
We see that as $\rho$ increases, NSEE first increases and then decreases.
The underlying reason is that too small jamming power makes NST small whereas too large jamming power leads to large power consumption; both aspects result in small NSEE.
We also find that adaptively adjusting the fraction of FD receivers to jamming power  significantly improves NSEE compared with fixed-$q$ cases, although the latter can approach the optimal performance in some specific regions, e.g., $q=1$ in the small $\rho$ region.

In the following corollary, we reveal how the legitimate node density $\lambda_l$ influences the optimal allocation between FD and HD receivers and the corresponding NSEE.
\begin{corollary}\label{sparse_case}
	In a sparse network, i.e., $\lambda_l\rightarrow 0$, both the optimal fraction $q^*_{ee}$ and the maximum NSEE $
	\mathbf{\Psi}^*$ keep constant, which are independent of $\lambda_l$.
\end{corollary}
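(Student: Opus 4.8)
The plan is to reduce the entire claim to the behavior of the single function $w(q)$ as $\lambda_l\to 0$. Since $\mathbf{\Psi}=\frac{1-\sigma}{P_t+P_c}J(q)$ with $J(q)=w(q)/(1+\rho_cq)$, and since $\rho_c=P_j/(P_t+P_c)$, the prefactor $\frac{1-\sigma}{P_t+P_c}$, and the lower endpoint $q_m=(\Delta-1)^{-1}\rho^{-\delta}$ are all free of $\lambda_l$, the only channel through which $\lambda_l$ enters both $q_{ee}^*$ and $\mathbf{\Psi}^*$ is $w(q)$, via the coefficients $\beta_1$ and $\beta_2$. Hence it suffices to show that $w(q)$ tends to a $\lambda_l$-independent limit on the (fixed) feasible interval $(q_m,1]$.

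First I would extract the dominant terms in $w_1$ and $w_2$. Both $\beta_1=(\sigma_o/(\kappa\lambda_lr_o^2))^{\alpha/2}$ and $\beta_2=(\pi\lambda_eN_e/(\kappa\lambda_l\epsilon_o))^{\alpha/2}$ scale as $\lambda_l^{-\alpha/2}$ and therefore diverge as $\lambda_l\to 0$, so $w_1(q)=\beta_1(1+\rho^\delta q)^{-\alpha/2}(1+o(1))$ and $w_2(q)=\beta_2(\rho^\delta q)^{-\alpha/2}(1+o(1))$ uniformly on compact subsets of $(q_m,1]$. The crucial observation is that in the ratio $\beta_1/\beta_2$ both $\lambda_l$ and $\kappa$ cancel, leaving $\beta_1/\beta_2=\Delta^{\alpha/2}$ with $\Delta=\sigma_o\epsilon_o/(\pi\lambda_eN_er_o^2)$. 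Therefore $w_1(q)/w_2(q)\to(\Delta\rho^\delta q/(1+\rho^\delta q))^{\alpha/2}$, and taking logarithms gives the $\lambda_l$-free limit $w(q)\to w_0(q)\triangleq\frac{\alpha}{2}\ln\frac{\Delta\rho^\delta q}{1+\rho^\delta q}$. As a consistency check, $w_0(q)>0$ precisely when $q>q_m$, matching the feasibility condition, and indeed $w(q_m)=0$ for every $\lambda_l$ since $\kappa\lambda_l$ cancels in the defining equation $\tau_t^o=\tau_e^o$.

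It then remains to pass the limit through the optimization. I would set $J_0(q)=w_0(q)/(1+\rho_cq)$, which is continuous and $\lambda_l$-independent on $(q_m,1]$; the quasi-concavity argument of Theorem \ref{opt_q_ee_theorem} holds verbatim for any positive $\beta_1,\beta_2$ and carries over to the limiting form, so $J_0$ is quasi-concave. Because the feasible set $(q_m,1]$ is fixed and $J(q)\to J_0(q)$ locally uniformly, the maximizer $q_{ee}^*$ converges to $\arg\max_{(q_m,1]}J_0(q)$ and the optimal value $J(q_{ee}^*)\to\max_{(q_m,1]}J_0(q)$; multiplying by the fixed prefactor yields $\mathbf{\Psi}^*\to\frac{1-\sigma}{P_t+P_c}\max_{(q_m,1]}J_0(q)$. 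Both limits are expressions in $\Delta,\rho,\delta,\rho_c,\sigma$ only, hence independent of $\lambda_l$.

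The step I expect to be the main obstacle is this last one: rigorously interchanging $\lim_{\lambda_l\to 0}$ with the $\arg\max$ and $\max$ operators. The pointwise limit is immediate, but to conclude convergence of the optimizer one needs locally uniform convergence on $(q_m,1]$ together with control of the boundary behaviour as $q\downarrow q_m$, where $w_2(q)\to w_1(q)$ and $w(q)\to 0$. The quasi-concavity of $J$, uniform in $\lambda_l$, is precisely what rules out a maximizer escaping to the boundary and guarantees that the unique interior root of $Q(q)=0$ converges to that of its limiting counterpart.
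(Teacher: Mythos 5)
Your proof is correct, and its engine is the same one the paper uses: $\lambda_l$ (together with $\kappa$) cancels in the ratio $\beta_1/\beta_2=\Delta^{\alpha/2}$, so the limiting problem on the fixed interval $(q_m,1]$ contains no trace of $\lambda_l$. The execution differs in two ways worth recording. First, the paper argues through the stationarity condition rather than the objective: it sends $\frac{w_1(q)-1}{w_1(q)}\to 1$ and $\frac{w_2(q)-1}{w_2(q)}\to 1$, substitutes into $Q(q)=w'(q)(1+\rho_c q)-\rho_c w(q)$, and declares $Q$ --- hence its root $q_{ee}^{\circ}$, hence $\mathbf{\Psi}^*$ --- asymptotically free of $\lambda_l$. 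In doing so it states the limit $w(q)\to\frac{\alpha}{2}\ln\Delta$, which is imprecise: the correct limit is your $w_0(q)=\frac{\alpha}{2}\ln\frac{\Delta\rho^{\delta}q}{1+\rho^{\delta}q}$, and if one took the paper's $q$-independent limit literally, then $w'(q)\to 0$ and $Q(q)\to-\rho_c w(q)<0$ would admit no interior root at all. The paper's conclusion survives only because $\lambda_l$-independence (which is all that is claimed) holds under either expression; your version is the accurate one, as your consistency check $w_0(q)>0\Leftrightarrow q>q_m$ confirms. Second, you supply the step the paper skips entirely: justifying that the root of the limiting equation is the limit of the roots, i.e., interchanging $\lambda_l\to 0$ with $\max$ and $\arg\max$, which you do via locally uniform convergence together with quasi-concavity uniform in $\lambda_l$ to keep the maximizer away from the boundary. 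In short: same route as the paper, executed more carefully, with the paper's sloppy limit repaired.
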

\begin{proof}
	Please refer to Appendix \ref{appendix_sparse_case}.
\end{proof}

\begin{figure}[!t]
	\centering
	\includegraphics[width=3.0in]{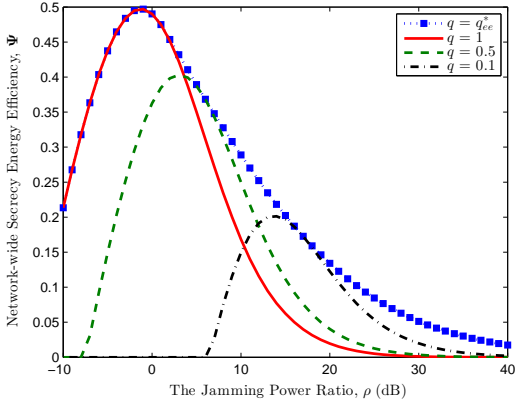}
	\caption{NSEE $\mathbf{\Psi}$ vs. $\rho$ for different values of $q$, with $\alpha=4$, $\lambda_l= 10^{-3}$, $\lambda_e = 10^{-4}$, $N_e=4$, $\sigma=0.3$, $\epsilon=0.02$ and $r_o=1$.}
	\label{NSEE_COMPARE}
\end{figure}

\subsection{With NST Constraint}
For more practical design, we should also take NST into consideration when maximizing NSEE.
In this subsection, we impose a constraint on problem \eqref{ee_max_def} that NST $\mathbf{\Omega}$ lies above threshold ${\Omega}^{\circ}$, i.e., $\mathbf{\Omega}>{\Omega}^{\circ}$.
Since we have already obtained the maximum $\mathbf{\Omega}^*$ in Sec. VI-A, for convenience, we only consider the case $\mathbf{\Omega}^*> {\Omega}^{\circ}$ here.
If $\mathbf{\Omega}^*\le {\Omega}^{\circ}$, we just set $\mathbf{\Psi}$ to zero.
\begin{corollary}\label{with_st_constraint}
	The optimal fraction of FD receivers that maximizes the NSEE $\mathbf{\Psi}$ in \eqref{ee_max_def} subject to the constraint $\mathbf{\Omega}>{\Omega}^{\circ}$ is given as follows
	\begin{align}\label{opt_q_ee_st}
&	q_{ee}^{\star} =
	\nonumber\\
&	\begin{cases}
	\emptyset, & {\pi\lambda_eN_e}{\epsilon_o^{-1}}\ge X,\\
	1,& {\pi\lambda_eN_e}{\epsilon_o^{-1}}<{X}~\&~\frac{W}{w(1)}>\frac{\delta\rho_c}{1+\rho_c},\\
	\max\left(q_{st}^{(1)},q^{\circ}_{ee}\right),& Y\le{\pi\lambda_eN_e}{\epsilon_o^{-1}}<{X}~\&~\frac{W}{w(1)}\le\frac{\delta\rho_c}{1+\rho_c},\\
	q_{ee}^{\tiny{+}},&  \text{otherwise},
	\end{cases}
	\end{align}
	where $q_{ee}^{\circ}$ has bee given in \eqref{opt_q_ee} and $q_{ee}^{+}$ is determined as
	\begin{equation}\label{opt_q_ee_add}
	q_{ee}^{+}=
	\begin{cases}
	~q_{ee}^{\circ}, & q_{st}^{(1)}\le q_{ee}^{\circ}<q_{st}^{(2)},\\
	~q_{st}^{(1)}, & q_{ee}^{\circ}<q_{st}^{(1)},\\
	~q_{st}^{(2)}, & q_{ee}^{\circ}\ge q_{st}^{(2)}.
	\end{cases}
	\end{equation}
	Let us denote $\mathbf{\Omega}(q)$ as a function of $q$.
	If there exists only one root $q\in(q_m,1]$ that satisfies $\mathbf{\Omega}(q)={\Omega}^{\circ}$, we denote this root as $q_{st}^{(1)}$; if there are two such roots, we denote them as $q_{st}^{(1)}$ and $q_{st}^{(2)}$ such that $q_{st}^{(1)}<q_{st}^{(2)}$.
\end{corollary}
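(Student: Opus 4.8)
The plan is to recast the constrained problem as maximizing the \emph{already-established unimodal} objective over the feasible interval cut out by the NST constraint, and then to read off the answer as the projection (clamping) of the unconstrained maximizer $q_{ee}^{\circ}$ onto that interval. Since $\mathbf{\Psi}=\frac{1-\sigma}{P_t+P_c}J(q)$, problem \eqref{ee_max_def} under the additional constraint $\mathbf{\Omega}(q)>\Omega^{\circ}$ is equivalent to maximizing $J(q)$ over $\mathcal{F}\triangleq\{q\in(q_m,1]:\mathbf{\Omega}(q)>\Omega^{\circ}\}$. From the proof of Theorem \ref{opt_q_st_theorem}, $\mathbf{\Omega}(q)=\lambda_l(1-\sigma)w(q)$ is quasi-concave, so its strict superlevel set $\mathcal{F}$ is a (possibly empty) interval whose endpoints are the boundary $q=1$ together with the roots $q_{st}^{(1)}$ (and, for an interior NST peak, $q_{st}^{(2)}$) of $\mathbf{\Omega}(q)=\Omega^{\circ}$; from Theorem \ref{opt_q_ee_theorem}, $J(q)$ is quasi-concave with interior peak $q_{ee}^{\circ}$ whenever $Q(1)\le 0$.

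First I would settle feasibility through the trichotomy of Theorem \ref{opt_q_st_theorem}. If $\pi\lambda_eN_e\epsilon_o^{-1}\ge X$ no positive NST exists, so $\mathcal{F}=\emptyset$ and $q_{ee}^{\star}=\emptyset$, giving the first line. Otherwise the standing assumption $\mathbf{\Omega}^{*}>\Omega^{\circ}$ makes $\mathcal{F}$ nonempty, and I would split on the shape of $\mathbf{\Omega}$: for $\pi\lambda_eN_e\epsilon_o^{-1}\in[Y,X)$ the NST is monotone increasing on $(q_m,1]$, so $\mathcal{F}=(q_{st}^{(1)},1]$ has a single left root; for $\pi\lambda_eN_e\epsilon_o^{-1}\in(0,Y)$ the NST peaks interiorly, so $\mathcal{F}=(q_{st}^{(1)},\min(q_{st}^{(2)},1))$.

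The algebraic hinge is that the condition $\frac{W}{w(1)}>\frac{\delta\rho_c}{1+\rho_c}$ is exactly $Q(1)>0$. Differentiating $w=\ln w_1-\ln w_2$ yields $w'(1)=\frac{\alpha}{2}W$, whence $W=\delta w'(1)$ and $Q(1)=\frac{W}{\delta}(1+\rho_c)-\rho_c w(1)$, so $Q(1)>0\iff\frac{W}{w(1)}>\frac{\delta\rho_c}{1+\rho_c}$ and in particular $Q(1)>0\Rightarrow W>0\Rightarrow w'(1)>0$. Thus when $\frac{W}{w(1)}>\frac{\delta\rho_c}{1+\rho_c}$, $\mathbf{\Omega}$ is still rising at $q=1$ and hence increasing on all of $(q_m,1]$ by quasi-concavity; this forces the $[Y,X)$ branch with $\mathbf{\Omega}(1)=\mathbf{\Omega}^{*}>\Omega^{\circ}$, so $q=1\in\mathcal{F}$ and, $J$ being increasing throughout, $q_{ee}^{\star}=1$ — the second line. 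When $Q(1)\le 0$, $J$ is unimodal with peak $q_{ee}^{\circ}$, and maximizing a unimodal function over an interval returns the projection of the peak onto that interval: on $\mathcal{F}=(q_{st}^{(1)},1]$ this is $\max(q_{st}^{(1)},q_{ee}^{\circ})$ (third line), and on $\mathcal{F}=(q_{st}^{(1)},\min(q_{st}^{(2)},1))$ it is the three-way clamp $q_{ee}^{+}$ of \eqref{opt_q_ee_add} (last line).

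The difficulty here is bookkeeping rather than any single hard estimate. The delicate points I expect to spend effort on are: establishing the equivalence $\frac{W}{w(1)}\gtrless\frac{\delta\rho_c}{1+\rho_c}\iff Q(1)\gtrless 0$ together with $Q(1)>0\Rightarrow w'(1)>0$, which is what glues the NST trichotomy of Theorem \ref{opt_q_st_theorem} to the NSEE dichotomy of Theorem \ref{opt_q_ee_theorem}; handling the degenerate one-root subcase inside the $(0,Y)$ branch, where $\mathbf{\Omega}(1)>\Omega^{\circ}$ makes $q=1$ the right endpoint so that one reads $q_{st}^{(2)}=1$ in \eqref{opt_q_ee_add}; and the open/closed-endpoint technicality from the strict inequality $\mathbf{\Omega}>\Omega^{\circ}$, which I would resolve by optimizing over the closure and noting the supremum is approached arbitrarily closely. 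With these in hand, \eqref{opt_q_ee_st} follows by tabulating the projection of $q_{ee}^{\circ}$ against the interval endpoints in each regime.
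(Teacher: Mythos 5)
Your proposal is correct and follows essentially the same route as the paper's proof: a case analysis keyed to the branches of \eqref{opt_q_ee}, using quasi-concavity of $\mathbf{\Omega}(q)$ and $J(q)$ to identify the feasible set as an interval bounded by the roots of $\mathbf{\Omega}(q)={\Omega}^{\circ}$, and then clamping the unconstrained maximizer $q_{ee}^{\circ}$ onto that interval. Your treatment is in fact somewhat more careful than the paper's, which leaves implicit both the identity $\frac{W}{w(1)}>\frac{\delta\rho_c}{1+\rho_c}\iff Q(1)>0$ (and its consequence $w^{'}(1)>0$, which guarantees $\mathbf{\Omega}(1)=\mathbf{\Omega}^{*}>{\Omega}^{\circ}$ in the second branch) and the open-endpoint technicality arising from the strict constraint $\mathbf{\Omega}>{\Omega}^{\circ}$.
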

\begin{proof}
	Please refer to Appendix \ref{appendix_with_st_constraint}.
\end{proof}

Fig. \ref{Max_EE} shows the maximum NSEE $\mathbf{\Psi}^*$ with $q_{ee}^{*}$ in \eqref{opt_q_ee} and $\mathbf{\Psi}^{\star}$ with $q_{ee}^{\star}$ in \eqref{opt_q_ee_st}.
As indicated in Corollary \ref{corollary_q_ee}, $\mathbf{\Psi}^*$ keeps constant in the small $\lambda_l$ region, whereas $\mathbf{\Psi}^{\star}$
becomes zero since the constraint $\mathbf{\Omega}>{\Omega}^{\circ}$ is not satisfied.
When $\lambda_l$ falls in the medium range, the curve of $\mathbf{\Psi}^*$ and its counterpart $\mathbf{\Psi}^{\star}$ merge and vary smoothly.
As $\lambda_l$ increases further, both $\mathbf{\Psi}^*$ and $\mathbf{\Psi}^{\star}$ quickly drop to zero.
Therefore, a moderate network density is desirable.
Fig. \ref{Max_EE} also indicates that although increasing jamming power helps to suppress eavesdroppers, it is at a cost of energy efficiency.

\begin{figure}[!t]
	\centering
	\includegraphics[width=3.0in]{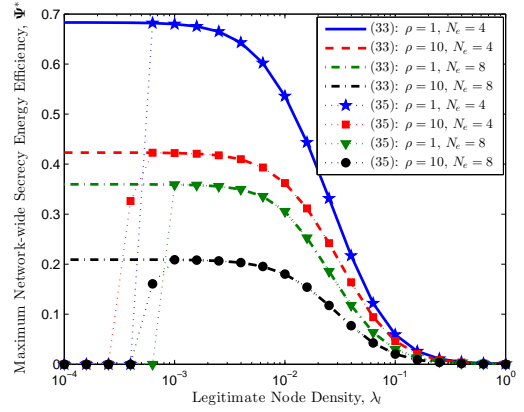}
	\caption{The maximum NSEE $\mathbf{\Psi}^*$ vs. $\lambda_l$ for different values of $\rho$ and $N_e$, with $\alpha=4$, $\lambda_e = 10^{-4}$, $r_o=1$, $\sigma=0.3$, $\epsilon=0.03$ and ${\Omega}^{\circ}=0.001$.
		The results labeled \eqref{opt_q_ee} and \eqref{opt_q_ee_st} are obtained without and with the constraint $\mathbf{\Omega}>{\Omega}^{\circ}$, respectively.}
	\label{Max_EE}
\end{figure}

To better guide network designers on how to well design the network, Table \ref{SUM_OPTIMAL} summarizes the relationships between the optimal fraction $q$ of FD receivers and key parameters in different objectives.
	
	\begin{table}[t]
		\caption{Relationships between Optimal $q^*$ and Key Parameters}
		\begin{center}
			\begin{tabular}{|c|c|c|}
				\hline
				Objectives&  $q^*$ increases with& $q^*$ decreases with\\
				\hline
				ASLN &  $\lambda_e$, $N_e$ & $\lambda_l$, $r_o$, $\eta$, $\rho$, $\tau_t$, $\tau_e$ \\
				\hline
				NST &  $\lambda_e$, $N_e$, $r_o$ & $\lambda_l$, $\rho$, $\sigma$, $\epsilon$ \\
				\hline
				NSEE &  $\lambda_e$, $N_e$, $r_o$ & $\rho$, $\sigma$, $\epsilon$ \\
				\hline
			\end{tabular}
		\end{center}
		\label{SUM_OPTIMAL}
	\end{table}

\section{Conclusion and Future Work}
In this paper, we study physical layer security in a wireless ad hoc network with a hybrid full-/half-duplex receiver deployment strategy.
We provide a comprehensive performance analysis and network design under a stochastic geometry framework.
We first analyze connection outage and secrecy outage probabilities for a typical legitimate link, and show that enabling more FD receivers increases the connection outage probability but decreases the secrecy outage probability.
Based on the analytical results of the dual probabilities, we prove that ASLN, NST and NSEE are all \emph{quasi-concave} on the fraction of FD receivers, and maximize each of them by providing the optimal fraction.
We further develop various useful properties on this optimal fraction.
Numerical results are demonstrated to validate our theoretical findings.

This paper opens up several interesting research directions. For example, the proposed framework can be extended to investigate the cooperative or multi-antenna FD receivers, where additional degrees of freedom might be gained not only in alleviating the self-interference but also in designing the jamming signals.
	The benefit of FD receiver jamming techniques can be further exploited by jointly optimizing the allocation between FD and HD receivers and the jamming transmit power of each FD receiver, given that the latter also strikes a non-trivial tradeoff between reliability and secrecy. Another possible direction for future research is to consider the randomness of self-interference and propose an adaptive and intelligent criterion to select work mode for receivers, e.g., letting those receivers with instantaneous self-interference power lying below a certain value work in the FD mode and the rest work in the HD mode.

\appendix

\subsection{Proof of Theorem \ref{opt_q_sl_theorem}}
\label{appendix_opt_q_sl_theorem}
We start by taking the first-oder derivative of $ {F}(q)$ on $q$
\begin{equation}\label{dF1}
{{F}^{'}(q)} =  {K}(q)e^{-{B}q-{C}/q},
\end{equation}
where
\begin{equation}\label{K}
{K}(q) = \left(A+{1}/{q}-1\right)\left(1+{C}/{q}-Bq\right)-{1}/{q}.
\end{equation}
To determine the sign of
${{F}^{'}(q)}$, we first investigate the behavior of $K(q)$ at the boundaries $q\rightarrow 0^+$ and $q=1$, respectively.
Substituting $q\rightarrow 0^+$ into \eqref{K} yields
$\lim_{q\rightarrow 0^+} {K}(q)
=\lim_{q\rightarrow 0^+}\left({C}/{q^2}\right)>0$.
Substituting $q=1$ into \eqref{K} yields $ {K}(1)=A(1+C-B)-1$, the sign of which relies on specific values of $A$, $B$ and $C$.
Consider the following two cases.
%We divide the discussion of the monotonicity of $ {F}(q)$ in $q$ into the following two cases.

1) ${K}(1)> 0$:
We have $A(1+C-B)>  1\Rightarrow C-B>{1}/{A}-1>0$, which yields $C/q-Bq\ge C-B>0$.
Substituting this inequality along with $1/q>1$ into \eqref{dF1}, we obtain
$ {K}(q)>  A(1+C-B)-1> 0$, i.e., ${{F}^{'}(q)}>0$.
This means $ {F}(q)$ monotonically increases in $q$ within the entire range $q\in(0,1]$, and the optimal $q$ that maximizes $ {F}(q)$ or $\bm{N}$ is $q^*=1$.

2) {${K}(1)< 0$}:
There at least exists one point $q\in(0,1]$ that satisfies ${K}(q)=0$ since $K(q)$ is a continuous function of $q$ and
$\lim_{q\rightarrow 0^+} {K}(q)>0$.
Denote an arbitrary zero-crossing point $q$ of
$K(q)$ as $q_{o}$, i.e., ${K}(q_o)=0$.
To determine the monotonicity of $ {F}(q)$ in $q$, we first take the second-order derivative of ${F}(q)$ at $q=q_o$ from \eqref{dF1}
\begin{equation}\label{dF2}
{F}^{''}(q_o) =  {K}'(q_o)e^{-{B}q_o-{C}/q_o},
\end{equation}
where
\begin{equation}\label{dK1}
{K}'(q_o)=
\left(B+{C}{q_o^{-2}}\right)A+
{2C}{q_o^{-3}}-{C}{q_o^{-2}}-B.
\end{equation}
Clearly, the sign of $ {F}^{''}(q_o)$ follows that of $ {K}'(q_o)$.
We resort to the equation $ {K}(q_o)=0$ in \eqref{K}, which yields $A = 1-\frac{1}{q_o}\left(1-\frac{1}{1+{C}/{q_o}-Bq_o}\right)$.
Given that $0<A<1$, we readily obtain $C/q>Bq$, substituting which combined with $0<q_o\le 1$ into \eqref{dK1} yields $ {K}'(q_o)<0$, i.e., $ {F}''(q_o)<0$.
Invoking the definition of single-variable quasi-concave
function \cite[Sec. 3.4.2]{Boyd2004Convex},
we conclude that $ {F}(q)$ is a quasi-concave function of $q$, and there exists a unique $q$ that maximizes $ {F}(q)$.
In other words, $ {F}(q)$ initially increases and then decreases in $q$, and the peak value of $ {F}(q)$ is achieved at the unique root $q$ of the equation ${K}(q)=0$.
By now, we have completed the proof.

\subsection{Proof of Corollary \ref{corollary_q_sl}}
\label{appendix_corollary_q_sl}

Recall \eqref{K}, and the optimal $q^{\circ}_{sl}$ satisfies $K(q^{\circ}_{sl})=0$.
We first take the first-order derivative of $q^{\circ}_{sl}$ on $A$ using the derivative rule for implicit functions
with $K(q^{\circ}_{sl})=0$, i.e.,
\begin{equation}
\frac{d q^{\circ}_{sl}}{d A}
=-\frac{\partial K(q^{\circ}_{sl})/\partial A}
{\partial K(q^{\circ}_{sl}) /\partial q^{\circ}_{sl}}.
\end{equation}
From \eqref{K}, we have
${\partial K(q^{\circ}_{sl}) /\partial A}=1+C/q^{\circ}_{sl}-Bq^{\circ}_{sl}>0$.
From \eqref{dK1}, we know that ${\partial K(q^{\circ}_{sl}) /\partial q^{\circ}_{sl}}<0$.
Thus, we obtain ${d q^{\circ}_{sl}}/{d A}>0$.
In a similar way, we can prove ${d q^{\circ}_{sl}}/{d B}<0$ and ${d q^{\circ}_{sl}}/{d C}>0$.
Observing the expressions of $A$, $B$ and $C$ directly yields the relationships between the optimal $q_{sl}^{\circ}$ and the relevant parameters.
For the perfect SIC case, substituting $\eta=0$, or, equivalently, $A=1$, into \eqref{opt_sl_q_exp} directly yields the result given in \eqref{q_sl_sic}.

\subsection{Proof of Theorem \ref{opt_q_st_theorem}}
\label{appendix_opt_q_st_theorem}

Taking the first-order derivative of $w(q)$ in \eqref{w} on $q$, i.e.,
\begin{equation}\label{dw_1}
w^{'}(q)
=\frac{w^{'}_1(q)}{w_1(q)}- \frac{w_2^{'}(q)}{w_2(q)},
\end{equation}
where $w^{'}_1(q)=-\frac{{\rho^{\delta}}[w_1(q)-1]}{{\delta\left(1+\rho^{\delta}q\right) }}$ and $w^{'}_2(q)=
-\frac{w_2(q)-1}{{\delta}q}$.
% are, respectively, given by
%\begin{align}
%\label{dw1_1}
%w^{'}_1(q)&
%=-\frac{{\alpha}\rho^{\delta}}{2(1+\rho^{\delta}q )}\left(w_1(q)-1\right),\\
%\label{dw2_1}
%w^{'}_2(q)&=
%-\frac{{\alpha}
%}{{2}q}(w_2(q)-1).
%\end{align}
Directly determining either the sign of $w^{'}(q)$ or the concavity of $w(q)$ from the second-order derivative $w^{''}(q)$ is difficult.
Instead, we prove the quasi-concavity of $w(q)$ on $q$ by reforming $w^{'}(q)$ as $w^{'}(q)=
\frac{w^{'}_1(q)}{w_1(q)}\phi(q)$
%\begin{equation}\label{dw_11}
%w^{'}(q)=
%\frac{w^{'}_1(q)}{w_2(q)}\phi(q)
%\end{equation}
with $\phi(q)$ given by
\begin{align}\label{phi}
\phi(q)=1-\frac{w_1(q)w_2^{'}(q)}{w^{'}_1(q)w_2(q)}=1-\frac{\left(1+\rho^{\delta}q\right)w_1(q)(w_2(q)-1)}{\rho^{\delta}q (w_1(q)-1) w_2(q)}.
\end{align}
Apparently, the first term ${w^{'}_1(q)}/{w_1(q)}$
is negative.
Next, we determine the sign of $\phi(q)$.
Taking the first-order derivative of $\phi(q)$ on $q$, and after some algebraic manipulations, we obtain
\begin{align}\label{dphi_1}
\phi^{'}(q)=\frac{\left[1+\delta w_2(q)\right]w_1(q)+\rho^{\delta}q\left[w_1(q)-w_2(q)\right]}{\delta \rho^{\delta} q^2 [w_1(q)-1]w_2^2(q)/[w_2(q)-1]}.
\end{align}
Since $w_1(q)>w_2(q)>1$, $\phi^{'}(q)>0$ always holds, i.e., $\phi(q)$ monotonically increases with $q$.
When $q=q_{m}$, we have $w_1(q_{m})=w_2(q_{m})$ and $\frac{w^{'}_2(q_{m})}{w^{'}_1(q_{m})}=\frac{1+\rho^{\delta}q_{m}}{\rho^{\delta}q_{m}}$, thus
$\phi(q_{m})=-\frac{1}{\rho^{\delta}q_{m}}<0$.
When $q = 1$, $\phi(1)=1-\frac{\beta_2(1+\rho^{\delta})\left(\beta_1+(1+\rho^{\delta})^{\alpha/2}\right)}{\beta_1\rho^{\delta}(\beta_2+\rho)}$,
%\begin{equation}\label{q_1}
%\phi(1)=1-\frac{\beta_2(1+\rho^{\delta})\left(\beta_1+(1+\rho^{\delta})^{\alpha/2}\right)}{(1+\beta_2)\beta_1\rho^{\delta}}.
%\end{equation}
the sign of which depends on $\beta_1$ and $\beta_2$.
Specifically, if $1+\beta_1^{-1}(1+\rho^{\delta})^{1+\alpha/2}>\beta_2^{-1}\rho^{1+\delta}$,
%i.e., $1+\left(1+\rho^{\delta}\right)^{1+\alpha/2}\left((\kappa\lambda_lr^2)/\left(\ln\frac{1}{1-\sigma}\right)\right)^{\alpha/2}
%>
%\rho^{\delta}\left(\left({\kappa\lambda_l\rho^{\delta}\ln{\frac{1}{1-\epsilon}}}\right)/\left({\pi\lambda_eN_e}\right)\right)^{\alpha/2}$,
we have $\phi(1)<0$; otherwise, $\phi(1)\ge 0$.
In the following, we derive the optimal $q$ that maximizes $w(q)$ by distinguishing two cases.

1) If $\phi(1)<0$, $\phi(q)<0$ holds in the entire range $q\in(q_{m},1]$.
Accordingly, we have $w^{'}(q)>0$, i.e., $w(q)$ monotonically increases with $q$.
Therefore, the optimal $q$ that maximizes $w(q)$ is $q^*=1$.

2) If $\phi(1)\ge 0$, $\phi(q)$ is initially negative and then positive in the range $q\in(q_{m},1]$; the zero-crossing point $q$ that satisfies $\phi(q)=0$ is denoted by $q_o$.
We can also conclude that $w^{'}(q)$ is initially positive and then negative after $q$ exceeds $q_o$.
In other words, $w(q)$ first increases and then decreases with $q$, and $q_o$ is the solution that yields the peak value of $w(q)$.

By now, we have proved the quasi-concavity of $w(q)$ on $q$.
Combined with %$q_{m}<1\Rightarrow ({\pi\lambda_eN_e})/\left({\ln\frac{1}{1-\epsilon}}\right) < \left(\ln\frac{1}{1-\sigma}\right)/\left(r^2\left(1+\rho^{-\delta}\right)\right)$
$q_{m}<1\Rightarrow \Delta>1+\rho^{-\delta}$
and the given results, we complete the proof.

\subsection{Proof of Corollary \ref{corollary_q_st}}
\label{appendix_corollary_q_st}

Recall \eqref{phi}, and the optimal $q^{\circ}_{st}$ satisfies $\phi(q^{\circ}_{st})=0$.
Taking the first-order derivative of $q^{\circ}_{st}$ on $\beta_1$ using the derivative rule for implicit functions
with $\phi(q^{\circ}_{st})=0$ yields
\begin{equation}
\frac{d q^{\circ}_{st}}{d \beta_1}
=-\frac{\partial \phi(q^{\circ}_{st})/\partial \beta_1}
{\partial \phi(q^{\circ}_{st}) /\partial q^{\circ}_{st}},
\end{equation}
where $\frac{\partial \phi(q^{\circ}_{st})}{\partial \beta_1}=\frac{{\beta_1^{-2}\left(1+\rho^{\delta}q\right)^{1+\alpha/2}}}{	{\rho^{\delta}q+\beta_2^{-1}(\rho^{\delta}q)^{1+\alpha/2}}}>0$ and $\frac{\partial \phi(q^{\circ}_{st})}{\partial q^{\circ}_{st}}>0$ (see \eqref{dphi_1}); and thus, $\frac{d q^{\circ}_{st}}{d \beta_1}<0$.
Similarly, we can prove $\frac{d q^{\circ}_{st}}{d \beta_2}>0$.
From the expressions of $\beta_1$ and $\beta_2$ given in \eqref{w}, we can infer that $q^{\circ}_{st}$ increases in $\lambda_e$, $N_e$ and $r_o$, while decreases in $\sigma$ and $\epsilon$.
As to $\frac{dq^{\circ}_{st}}{d\rho}$, we first express $\phi(\rho^\delta q)$ as
% let $b_1\triangleq \beta_1^{-1}=\left(\frac{\kappa\lambda_lr_o^2}{\ln\frac{1}{1-\sigma}}\right)^{\alpha/2}$ and $b_2\triangleq 1/(\beta_2\rho)
%=\left(\frac{\kappa\lambda_l\ln{\frac{1}{1-\epsilon}}}{\pi\lambda_eN_e}\right)^{\alpha/2}$ 	 such that
\begin{equation}\label{phi_rho}
\phi(\rho^\delta q)= 1 - \frac{1+\rho^\delta q+\beta_1^{-1}(1+\rho^\delta q)^{1+\alpha/2}}{\rho^\delta q+\beta_2^{-1}(\rho^\delta q)^{1+\alpha/2}}.
\end{equation}
Taking the first-order derivative of $\phi(\rho^\delta q)$ on $\rho^\delta q$ and invoking the equation $\phi(\rho^\delta q^{\circ}_{st})=0$, we can prove $\frac{\partial\phi(q^{\circ}_{st})}{\partial \rho}>0$.
Thereby, we have $\frac{d q^{\circ}_{st}}{d \rho}
=-\frac{\partial \phi(q^{\circ}_{st})/\partial \rho}
{\partial \phi(q^{\circ}_{st}) /\partial q^{\circ}_{st}}<0$.
As to $\frac{dq^{\circ}_{st}}{d\lambda_l}$, we let
\begin{equation}\label{phi_rho_2}
\phi(\lambda_l)= 1 - \frac{1+\rho^\delta q+b_1\lambda_l^{\alpha/2}(1+\rho^\delta q)^{1+\alpha/2}}{\rho^\delta q+b_2\lambda_l^{\alpha/2}(\rho^\delta q)^{1+\alpha/2}},
\end{equation}
%$b_1\triangleq \beta_1^{-1}=\left(\frac{\kappa\lambda_lr_o^2}{\ln\frac{1}{1-\sigma}}\right)^{\alpha/2}$ and $b_2\triangleq 1/(\beta_2\rho)
%=\left(\frac{\kappa\lambda_l\ln{\frac{1}{1-\epsilon}}}{\pi\lambda_eN_e}\right)^{\alpha/2}$
where $b_1\triangleq \left(\frac{\kappa r_o^2}{\sigma_o}\right)^{\frac{\alpha}{2}}$ and $b_2\triangleq\left(\frac{\kappa \epsilon_o}{\pi\lambda_eN_e}\right)^{\frac{\alpha}{2}}$.
Taking the first-order derivative of $\phi(\lambda_l)$ on $\lambda_l$ and invoking $\beta_1(1+\rho^{\delta}q)^{-\alpha/2}>\beta_2(\rho^\delta q)^{-\alpha/2}$ in \eqref{w}, we can prove $\frac{\partial\phi(q^{\circ}_{st})}{\partial \lambda_l}>0$ and $\frac{d q^{\circ}_{st}}{d \lambda_l}
=-\frac{\partial \phi(q^{\circ}_{st})/\partial \lambda_l}
{\partial \phi(q^{\circ}_{st}) /\partial q^{\circ}_{st}}<0$.
By now, the proof is complete.

\subsection{Proof of Theorem \ref{opt_q_ee_theorem}}
\label{appendix_opt_q_ee_theorem}

We start by giving the first-order derivative of $J(q)$ on $q$,
\begin{equation}\label{dJ1}
{J^{'}(q)} = {Q(q)}{(1+\rho_cq)^{-2}},
\end{equation}
where $Q(q)$ is given in \eqref{opt_q_ee_exp}.
To proceed, we first give the following lemma which is very important to subsequent proof.
\begin{lemma}\label{lemma_dQ}
	If $w^{'}(q)>0$ for $q\in(q_m,1]$, $Q^{'}(q)<0$ holds.
\end{lemma}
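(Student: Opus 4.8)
The plan is to first observe that the statement collapses to a sign condition on $w^{''}(q)$ alone. Differentiating $Q(q)=w^{'}(q)(1+\rho_cq)-\rho_cw(q)$ from \eqref{opt_q_ee_exp}, the two contributions $\rho_cw^{'}(q)$ cancel, leaving $Q^{'}(q)=w^{''}(q)(1+\rho_cq)$. Since $1+\rho_cq>0$ on the feasible region, it suffices to prove that $w^{''}(q)<0$ whenever $w^{'}(q)>0$. This reformulation is what I would carry throughout.

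Rather than expanding $w^{''}(q)$ head on (its direct second derivative is unwieldy), I would recycle the factorization already established in the proof of Theorem \ref{opt_q_st_theorem}, namely $w^{'}(q)=\frac{w^{'}_1(q)}{w_1(q)}\phi(q)$ with $\phi(q)$ as in \eqref{phi}. Differentiating this product gives
\begin{equation}
w^{''}(q)=\left(\ln w_1(q)\right)^{''}\phi(q)+\frac{w^{'}_1(q)}{w_1(q)}\phi^{'}(q),
\end{equation}
which expresses $w^{''}(q)$ in terms of quantities whose signs are either already known or easy to pin down.

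I would then read off the signs under the hypothesis $w^{'}(q)>0$. Because $w^{'}_1(q)/w_1(q)<0$ (noted in Appendix \ref{appendix_opt_q_st_theorem}), the hypothesis $w^{'}(q)>0$ is equivalent to $\phi(q)<0$. The second summand is then negative, since $\phi^{'}(q)>0$ was already proved in \eqref{dphi_1} while $w^{'}_1(q)/w_1(q)<0$. The first summand is negative provided $\left(\ln w_1(q)\right)^{''}>0$, i.e. provided $\ln w_1(q)$ is convex; combined with $\phi(q)<0$ this makes both terms negative and finishes the argument.

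The main obstacle is therefore the convexity of $\ln w_1(q)$, which is the one genuinely new computation. Writing $u_1(q)\triangleq w_1(q)-1=\beta_1(1+\rho^{\delta}q)^{-\alpha/2}$, the sign of $\left(\ln w_1\right)^{''}$ is that of the numerator $u_1^{''}(1+u_1)-(u_1^{'})^2$. Using $u_1^{'}=-\frac{\rho^{\delta}}{\delta}\frac{u_1}{1+\rho^{\delta}q}$ and the corresponding $u_1^{''}$, I expect this numerator to collapse to a positive multiple of $(1+\delta)u_1+\delta u_1^{2}$, which is manifestly positive. Establishing this reduction is the crux; once it is in place, the sign bookkeeping above yields $w^{''}(q)<0$, and hence $Q^{'}(q)<0$, completing the proof.
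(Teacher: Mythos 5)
Your proof is correct, but it takes a genuinely different route from the paper's for the core step. Both arguments begin identically, reducing the lemma to showing $w^{''}(q)<0$ whenever $w^{'}(q)>0$ via $Q^{'}(q)=w^{''}(q)(1+\rho_c q)$. From there the paper differentiates the two-term form $w^{'}(q)=\frac{w_1^{'}(q)}{w_1(q)}-\frac{w_2^{'}(q)}{w_2(q)}$ head on, producing the lengthy display \eqref{dw2}, and then substitutes the inequality $w^{'}(q)>0\Rightarrow \frac{w_2(q)-1}{w_1(q)-w_2(q)}>\frac{\rho^\delta q}{w_1(q)}$ into that expression to force negativity. You instead differentiate the factorized form $w^{'}(q)=\frac{w_1^{'}(q)}{w_1(q)}\phi(q)$ from the proof of Theorem \ref{opt_q_st_theorem}, so that $w^{''}(q)=\left(\ln w_1(q)\right)^{''}\phi(q)+\frac{w_1^{'}(q)}{w_1(q)}\phi^{'}(q)$; the hypothesis $w^{'}(q)>0$ becomes $\phi(q)<0$, the second term is negative by the already-established $\phi^{'}(q)>0$ in \eqref{dphi_1}, and the only new work is the log-convexity of $w_1$. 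That computation, which you flagged as the crux, does check out: with $u_1(q)=\beta_1(1+\rho^\delta q)^{-\alpha/2}$ one has $u_1^{'}=-\frac{\rho^\delta}{\delta}\frac{u_1}{1+\rho^\delta q}$ and $u_1^{''}=\frac{(1+\delta)\rho^{2\delta}u_1}{\delta^2(1+\rho^\delta q)^2}$, so
\begin{equation}
u_1^{''}(1+u_1)-(u_1^{'})^2=\frac{\rho^{2\delta}}{\delta^2(1+\rho^\delta q)^2}\left[(1+\delta)u_1+\delta u_1^2\right]>0,
\end{equation}
exactly as you predicted. The trade-off: your argument is shorter, reuses the machinery ($\phi$ and its monotonicity) already built for the NST optimization, and avoids the unwieldy expansion \eqref{dw2} entirely; the paper's argument is more self-contained in that it never invokes $\phi$, but pays for it with heavier algebra and a less transparent inequality substitution.
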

\begin{proof}
	Since $Q^{'}(q)=w{''}(q)(1+Bq)$, to prove $Q^{'}(q)>0$ we need only to prove $w^{''}(q)>0$.
	Taking the derivative of $w^{'}(q)$ in \eqref{dw_1} on $q$ yields $w^{''}(q)$ which is given in \eqref{dw2} at the top of the next page, where the last equality holds for $w_1^{''}(q)=\left({(1+\delta)
		[w_1(q)-1]\rho^{2\delta}}\right)/\left({\delta^2(1+\rho^\delta q)^2}\right)$ and $w_2^{''}(q)=\left({(1+\delta)
		[w_2(q)-1]}\right)/\left({\delta^2q^2}\right)$.
	\begin{figure*}[t]
		\begin{align}\label{dw2}
		w^{''}(q)
		&=\frac{w_1^{''}(q)w_1(q)-(w_1^{'}(q))^2}{w_1^2(q)}
		-\frac{w_2^{''}(q)w_2(q)-(w_2^{'}(q))^2}{w_2^2(q)}\nonumber\\
		&=\frac{(w_1(q)-w_2(q))}{\delta^2 w_1^2(q)w_2^2(q)q^2(1+\rho^\delta q)^2}\Bigg\{
		{\rho^\delta q}\left(\rho^\delta q\left[w_1(q)+w_2(q)-w_1(q)w_2(q)\right]-\frac{2w_1^2(q)[w_2(q)-1]}{w_1(q)-w_2(q)}\right)-\frac{w_1^2(q)[w_2(q)-1]}{ [w_1(q)-w_2(q)]}\nonumber\\
		&\qquad\qquad \qquad\qquad\qquad\qquad +\delta\rho^\delta qw_1(q)w_2(q)\left(\rho^\delta q-\frac{2w_1(q)[w_2(q)-1]}{w_1(q)-w_2(q)}\right)-
		\frac{\delta w_1^2(q)w_2(q)[w_2(q)-1]}{ w_1(q)-w_2(q)}\Bigg\},
		\end{align}
		\hrulefill
	\end{figure*}
	Invoking \eqref{dw_1}, we can readily obtain the following relationship
	\begin{equation}
	w^{'}(q)>0\Rightarrow \frac{w_2(q)-1}{w_1(q)-w_2(q)}>\frac{\rho^\delta q}{w_1(q)}.
	\end{equation}
	Plugging the above inequality into \eqref{dw2} yields
	\begin{align}
	&w^{''}(q) <
	-\frac{w_1(q)-w_2(q)}{\delta^2 w_1(q)w_2^2(q)\rho^{-\delta} q(1+\rho^\delta q)^2}\times\nonumber\\
	&\left({1}+\rho^\delta q[1-w_2(q)/w_1(q)]+\delta(1+\rho^\delta q)w_2(q)\right)<0,
	\end{align}
	which completes the proof.
\end{proof}

Next, we are going to determine the sign of $Q(q)$ or $J^{'}(q)$ in \eqref{dJ1}.
We first determine the sign of $Q(q)$ at the boundaries $q_{m}$ and $1$.
Combined with $w_1(q_m)=w_2(q_m)$, we have
\begin{equation}\label{F_m}
Q(q_{m})=\frac{w_1(q_{m})-1}{\delta w_1(q_{m})}\left(\frac{1}{q_{m}}-\frac{\rho^\delta}{1+\rho^\delta q_{m}}\right)>0.
\end{equation}
Substituting $q=1$ into $Q(q)$ yields
\begin{align}\label{Q_1}
&Q(1)
=w^{'}(1)(1+\rho_c)-\rho_cw(1)=\nonumber\\
&\frac{\alpha}{2}{(1+\rho_c)}\left({1-w_2^{-1}(1)}-\frac{1-w_1^{-1}(1) }{ 1+\rho^{-\delta} }\right)-{\rho_cw(1)},
\end{align}
The sign of $Q(1)$ depends on the values of involved parameters.
Let us distinguish two cases.

1) If $Q(1)> 0$, we have $w^{'}(1)>0$.
Since $w(q)$ is quasi-concave in $q$ (Theorem \ref{opt_q_st_theorem}), $w^{'}(q)>0$ holds in the whole range of $q\in(q_m,1]$, which further yields $Q^{'}(q)<0$ according to Lemma \ref{lemma_dQ}.
In other words, $Q(q)$ monotonically decreases in $q$, and thus, $Q(q)>Q(1)>0$, or, $J^{'}(q)>0$.
This means $J(q)$ monotonically increases in $q$, and the optimal $q$ that maximizes $J(q)$ is $q=1$.

2) If $Q(1)\le 0$, combined with $Q(q_m)>0$ in \eqref{F_m}, there at least exists one point $q\in(q_m,1]$ that satisfies $Q(q)=0$ due to the continuity of $Q(q)$ in $q$.
Denote a zero-crossing point $q$ of
$Q(q)$ as $q_{o}$ such that ${Q}(q_o)=0$, or, $J^{'}(q_o)=0$.
To determine the quasi-concavity of $ J(q)$ in $q$, we first take the second-order derivative of $J(q)$ at $q=q_o$, which is
\begin{equation}\label{dJ2}
J^{''}(q_o)={Q^{'}(q_o)}{(1+\rho_cq_o)^{-2}}.
\end{equation}
Recalling $Q(q_o)=0$ yields $w^{'}(q_o)=\frac{Bw(q_o)}{1+\rho_cq_o}>0$.
From Lemma \ref{lemma_dQ}, we obtain $Q^{'}(q_o)<0$, i.e., $J^{''}(q_o)<0$.
This means $J(q)$ is quasi-concave on $q$, and the optimal $q$ that maximizes $J(q)$ is the unique root of equation $Q(q)=0$, i.e., $q=q_o$.
By now, the proof is complete.

\subsection{Proof of Corollary \ref{corollary_q_ee}}
\label{appendix_corollary_q_ee}

Recall \eqref{opt_q_ee_exp}, and the optimal $q^{\circ}_{ee}$ satisfies $Q(q^{\circ}_{ee})=0$.
Similar to the proof of Corollary \ref{corollary_q_st}, we take the first-order derivative of $q^{\circ}_{ee}$ on $w_1(q^{\circ}_{ee})$ and on $w_2(q^{\circ}_{ee})$, respectively, using the derivative rule for implicit functions
with $Q(q^{\circ}_{ee})=0$, and then prove $\frac{d q^{\circ}_{ee}}{d w_1(q^{\circ}_{ee})}<0$ and $\frac{d q^{\circ}_{ee}}{d w_2(q^{\circ}_{ee})}>0$.
Through observing the monotonicity of $w_1(q)$ and $w_2(q)$ with respect to the parameters involved in Corollary \ref{corollary_q_ee}, we can complete the proof.

\subsection{Proof of Corollary \ref{sparse_case}}
\label{appendix_sparse_case}

The expressions of $w_1(q)$ and $w_2(q)$ in \eqref{w} tell that as $\lambda_l\rightarrow 0$, we have  $\frac{w_1(q)-1}{w_1(q)}\rightarrow 1$, $\frac{w_2(q)-1}{w_2(q)}\rightarrow 1$ and $w(q)\rightarrow\frac{\alpha}{2}\ln\Delta$.
Substituting these results into \eqref{opt_q_ee_exp}, we find $Q(q)$ independent of $\lambda_l$ and so is the root $q$ of $Q(q)=0$.
Plugging the obtained solution $q$ into \eqref{ee_max_def}, we can easily conclude that the resulting $\mathbf{\Psi}$ is also independent of $\lambda_l$.
The proof is complete.

\subsection{Proof of Corollary \ref{with_st_constraint}}
\label{appendix_with_st_constraint}
Let us recall \eqref{opt_q_ee}.
Obviously, $q_{ee}^{\star}=\emptyset$ if $q_{ee}^{*}=\emptyset$; $q_{ee}^{\star}=1$ if $q_{ee}^{*}=1$ and $\mathbf{\Omega}(1)>\Omega^{\circ}$ simultaneously hold.
When $q_{ee}^{*}=q_{ee}^{\circ}$, let us distinguish two cases.
In the first case, there is only one root $q\in(q_m,1]$, denoted as $q_{st}^{(1)}$, that satisfies $\mathbf{\Omega}(q)={\Omega}^{\circ}$. If $q_{st}^{(1)}<q_{ee}^{\circ}$, we have $\mathbf{\Omega}(q_{ee}^{\circ})>{\Omega}^{\circ}$ and $q_{ee}^{\star}=q_{ee}^{\circ}$; otherwise, $q_{ee}^{\star}=q_{st}^{(1)}$.
In the second case, there are two roots $q_{st}^{(1)}$ and $q_{st}^{(2)}$ such that $q_{st}^{(1)}<q_{st}^{(2)}$.
In a similar way, we can obtain $q_{ee}^{\star}=q_{ee}^{+}$ with $q_{ee}^{+}$ given in \eqref{opt_q_ee_st}.
By now, the proof is complete.

\end{document}